\pgfplotsset{compat=newest}
\newtheorem{theorem}{Theorem}
\newtheorem{lemma}{Lemma}
\newtheorem{proposition}{Proposition}
\newtheorem{definition}{Definition}
\newtheorem{assumption}{Assumption}
\newcommand{\n}[1]{\ensuremath{n_\mathrm{#1}}}
\newcommand{\R}[1]{\ensuremath{\mathbb{R}^{#1}}}
\newcommand{\E}{\ensuremath{\mathbb{E}}}
\newcommand{\Tr}{\ensuremath{\mathrm{Tr}}}
\newcommand{\bbP}{\ensuremath{\mathbb{P}}}
\newcommand{\1}[1]{\ensuremath{\mathds{1}_{#1}}}
\newcommand{\bbF}{\ensuremath{\mathbb{F}}}
\title{\LARGE \bf
Probabilistically Input-to-State Stable Stochastic \\ Model Predictive Control
}
\author{Maik Pfefferkorn and Rolf Findeisen
\thanks{M. Pfefferkorn and R. Findeisen are with the Control and Cyper-Physical Systems Laboratory, Technical University of Darmstadt, Germany
        {\tt\small \{maik.pfefferkorn, rolf.findeisen\}@iat.tu-darmstadt.de}}%
}
\begin{document}

\maketitle
\thispagestyle{empty}
\pagestyle{empty}

\begin{abstract}
    Employing model predictive control to systems with unbounded, stochastic disturbances poses the challenge of guaranteeing safety, i.e., repeated feasibility and stability of the closed-loop system.
    Especially, there are no strict repeated feasibility guarantees for standard stochastic MPC formulations.
    Thus, traditional stability proofs are not straightforwardly applicable.
    We exploit the concept of input-to-state stability in probability and outline how it can be used to provide stability guarantees, circumventing the requirement for strict repeated feasibility guarantees.
    Loss of feasibility is captured by a back-up controller, which is explicitly taken into account in the stability analysis.
    We illustrate our findings using a numeric example.
\end{abstract}


\section{Introduction}\label{sec:introduction}

Model predictive control (MPC)  \cite{Rawlings2019} is an advanced, widely used feedback control approach \cite{Schwenzer2021, Lucia2016}.
MPC relies on a prediction model of the system to determine control actions such that the future system behavior is optimized with respect to a task objective.
Standard nominal MPC formulations assume perfect knowledge of the system, which is often unrealistic in practice.
Furthermore, in the presence of uncertainty, as is usually the case in applications, the performance of nominal MPC schemes can significantly degrade.

To remedy from this situation, robust and stochastic MPC schemes can be employed, both exploiting high-level information about the nature of the uncertainty \cite{Rawlings2019}.
In particular, robust MPC \cite{Rakovic2012} is suited when the uncertainty is bounded, and control actions are determined such that they remain safe even for worst-case uncertainty realizations, irrespective of their probability to occur.
While this leads to rigorous and theoretically sound safety and stability guarantees, robust MPC is inherently conservative.
In contrast to that, stochastic MPC is suited for random disturbances.
In such case, worst-case-based constraints as used in robust MPC are relaxed to chance constraints, allowing for constraint violations if they occur with sufficiently low probability and thereby reducing conservatism.
Stochastic MPC has gained high interest in recent years for linear systems, e.g., \cite{Hewing2020b, Lorenzen2017, Kogel2021, Schluter2022, Brudigam2022, Schluter2023} and references therein, and nonlinear systems, e.g., \cite{Munoz2021, Mesbah2014, McAllsiter2023} and references therein.
However, its challenges lie in finding computationally tractable problem formulations and deriving rigorous safety and stability guarantees.

For systems with additive stochastic uncertainty, computationally tractable stochastic MPC formulations usually use a decomposition of the uncertain system into a deterministic, nominal part and a stochastic part involving the uncertainty.
The nominal dynamics are then used as prediction model and chance constraints are addressed via tightened constraints on the nominal system exploiting probabilistic reachable sets of the uncertainty dynamics \cite{Hewing2020a}. 

Obtaining safety, i.e., repeated feasibility, and stability guarantees for stochastic MPC is more challenging due to the stochastic nature of the disturbance and the MPC formulation. 
In particular, a feasible solution at the current time point is not guaranteed to remain feasible when executing it, and hence, cannot be used to construct a feasible candidate solution for the next time step, as usually done in MPC \cite{Lorenzen2017, Kouvaritakis2010}.
Rigorous and strict repeated feasibility guarantees can be obtained using a mixed robust-stochastic constraint tightening but requires boundedness of the disturbance \cite{Lorenzen2017, Munoz2021, McAllsiter2023, Kouvaritakis2010, Mayne2019}, thereby introducing conservatism as argued in \cite{Lorenzen2017}.
However, those strict repeated feasibility guarantees enable to prove stochastic stability.
For unbounded disturbances, strict repeated feasibility guarantees can be recovered by alternative MPC formulations that do not rely on initializing the prediction model by the measured uncertain state.
To this end, indirect feedback was proposed in \cite{Hewing2020b}, where the measured uncertain state is exploited for cost function evaluation and chance constraints are enforced using predictions conditioned on initial time instead of the most recent measurement. 
Chance constraint satisfaction in closed-loop is shown and a decrease condition for the optimal value function is given, providing a stochastic stability notion.
A similar approach was recently presented in \cite{Schluter2022, Schluter2023}, where the initial condition of each horizon is optimized over such that the MPC remains feasible at all times.
Correspondingly, chance constraint satisfaction in closed-loop and stability of the proposed MPC scheme are shown.
In contrast to that, the authors in \cite{Primbs2009} circumvent the requirement of repeated feasibility and prove stochastic stability for a stochastic MPC formulation in the case of unbounded disturbances.
Alternatively, for stochastic MPC formulations that lack strict repeated feasibility guarantees, back-up controllers can be employed to recover safety and stability when the MPC becomes infeasible, e.g., \cite{Brudigam2022, Paulson2017}.

We build upon the results in \cite{Primbs2009} and exploit the concept of input-to-state stability in probability (ISSp) \cite{Culbertson2023} to derive stability guarantees for linear stochastic MPC.
ISSp extends input-to-state stability (ISS), which is a widely used stability notion for systems subject to bounded disturbances, to systems subject to unbounded, stochastic disturbances.
Intuitively speaking, ISSp means that the ISS property holds with certain probability for finite horizons, indicating that the system converges in probability and for finite horizons to a neighborhood of the nominal system's equilibrium point.
We introduce a stochastic MPC formulation and its computationally tractable reformulation using probabilistic reachable sets, and propose a control policy that uses the MPC whenever feasible and exploits a back-up controller otherwise.
We show ISSp for the uncertain system under both the stochastic MPC and the back-up controller using Lyapunov-like arguments.
Combining those results, we conclude ISSp of the uncertain system under the proposed control policy.
Notably, the proposed ISSp guarantees do not require strict repeated feasibility of the MPC, rendering the results valid for a large class of stochastic MPC formulations where the uncertainty is unbounded or the bounds are unknown.
We illustrate ISSp according to our results using a numeric example.

The remainder of the article is organized as follows.
We introduce the considered stochastic MPC formulation in Section \ref{sec:smpc}.
Thereafter, stability of the SMPC formulation is investigated in Section \ref{sec:issp}, including the presentation of the main results.
We illustrate our findings using numeric examples in Section \ref{sec:simulation} before concluding in Section \ref{sec:conclusions}.

\noindent \textbf{Notation:} 
For a random vector $X$, we denote the expected value vector by $\mathbb{E}[X]$, where $\mathbb{E}[\cdot]$ is the expectation operator.
Further, $\bbP[X \in \mathcal{X}]$ is the probability of $X$ on a set $\mathcal{X}$ and $\mathds{1}_\mathcal{X}\{X=x\}$ denotes the indicator function on $\mathcal{X}$.
The unit matrix of dimension $n$ is denoted by $\mathbb{I}_n$, where the subscript is dropped when the dimension is clear from the context.
Given two real, symmetric matrices $C,D \in \R{n \times n}$, we denote the Loewner order by $C \preceq_L D$, indicating that $D - C$ is positive semi-definite, i.e., $D - C \succeq 0$.
For $a,b \in \mathbb{N}, a \leq b$, we denote by $\mathcal{I}_{a:b} := \{z \in \mathbb{N} \mid a \leq z \leq b \}$ the set of natural numbers between $a$ and $b$.
Given a compact set $\mathcal{W}$, we denote its interior by $\mathrm{int}(\mathcal{W})$ and its boundary by $\partial \mathcal{W}$.
For two sets $A$ and $B$, we denote their Pontryagin difference by $A \ominus B := \{ a \in A \mid \forall b \in B: a + b \in A \}$.
\section{Stochastic Model Predictive Control}\label{sec:smpc}


We consider stochastic linear time-invariant systems
\begin{equation}\label{eq:system}
    x_{k+1} = A x_k + B u_k + w_k,~ x_0 = \bar{x},
\end{equation}
where $x_k \in \R{\n{x}}$, $u_k \in \R{\n{u}}$ and $w_k \in \R{\n{x}}$ denote the state, control input and process noise, respectively, at time point $k \in \mathbb{N}$.
The matrices $A \in \R{\n{x} \times \n{x}}$ and $B \in \R{\n{x} \times \n{u}}$ are assumed to be known.

\begin{assumption}\label{as:disturbance}
    The process noise $w_k \sim \mathcal{P}^w$ is a zero-mean random vector with covariance matrix $\Sigma^w = \E[w_k w_k^\top], \Sigma^w \preceq_L \nu \mathbb{I}$ for some $\nu \in \R{}_+, \nu < \infty$. The distribution $\mathcal{P}^w$ belongs to the moment-based ambiguity set $\mathcal{A}^w = \{ \mathcal{P}^w \mid \E[w_k] = 0, \E[w_k w_k^\top] = \Sigma^w$ \} and $w_k$ is independent and identically distributed (i.i.d.) over time.
\end{assumption}

\noindent In consequence of Assumption \ref{as:disturbance}, $x_{k+1} \sim \mathcal{P}_{k+1}^x$ becomes a random vector with $\mathcal{P}_{k+1}^x \in \mathcal{A}^x_{k+1} = \{ \mathcal{P}^x_{k+1} \mid \E[x_{k+1}] = z_{k+1}, \E \big[ (x_{k+1}-z_{k+1})(x_{k+1}-z_{k+1})^\top \big] = \Sigma^x_{k+1} \}$.
Therein, $z_{k+1}$ denotes the state mean with nominal dynamics
\begin{align}
    z_{k+1} & = A z_k + B u_k,~ z_0 = \bar{x}, \label{eq:nominal_system}
\end{align}
and $\Sigma^x_{k+1} = A \Sigma^x_k A^\top + \Sigma^w,~ \Sigma^x_0 = 0$, denotes the state covariance matrix.
We define the error system
\begin{equation}\label{eq:error_system}
    e_{k+1} = x_{k+1} - z_{k+1} = Ae_k + w_k,~ e_0 = 0,
\end{equation}
with $\E[e_{k+1}]=0$ and $\E\big[e_{k+1}e_{k+1}^\top\big]=\Sigma^x_{k+1}$ such that $x_{k+1} = z_{k+1} + e_{k+1}$.
Note that if $\mathcal{P}^w$ is known exactly, the state distributions $\mathcal{P}_k^x$ can be computed exactly.

The control objective is to (optimally) stabilize system \eqref{eq:system} at the origin while satisfying state and input constraints given via the polytopes
\begin{subequations}
\begin{align}
    \mathcal{X} &= \{ x \in \R{\n{x}} \mid H^x x \leq h^x \}, ~0 \in \mathrm{int}(\mathcal{X}), ~\text{and} \label{eq:state_constraints} \\ 
    \mathcal{U} &= \{ u \in \R{\n{u}} \mid H^u u \leq h^u \}, ~0 \in \mathrm{int}(\mathcal{U}). \label{eq:input_constraints}
\end{align}
\end{subequations}
Herein, $H^x \in \R{\n{c}^x \times \n{x}}$ and $h^x \in \R{\n{c}^x}$ define $\n{c}^x \in \mathbb{N}$ half-space constraints on the states, and $H^u \in \R{\n{c}^u \times \n{u}}$ and $h^u \in \R{\n{c}^u}$ define $\n{c}^u \in \mathbb{N}$ half-space constraints on the inputs.
Since Assumption \ref{as:disturbance} includes distributions $\mathcal{P}^w$ with unbounded support, we cannot expect to achieve deterministic satisfaction of state constraints at all times.
Rather, we aim to design a controller that achieves $\bbP[x_k \in \mathcal{X}] \geq 1 - \delta$ for a maximum tolerable probability of constraint violation $\delta \in (0, 1)$.
Moreover, given the additive and stochastic nature of $w_k$, it is intuitively clear that it is not possible to stabilize system \eqref{eq:system} exactly at the origin and in a deterministic manner.
Rather, a suitable controller needs to achieve probabilistic stabilization of system \eqref{eq:system} in a preferably small neighborhood of the origin.
Given these requirements, we employ stochastic model predictive control.

\subsection{Stochastic Model Predictive Control Formulation}

Stochastic MPC relies on the repeated solution of an underlying stochastic optimal control problem of the form
\begin{subequations}\label{eq:ocp}
\begin{align}
    \min_{\mathbf{{u}}_k} \Bigg\{ \! J_N&(\mathbf{{u}}_k, \mathbf{{x}}_k) \! = \! \E \! \Bigg[ \sum_{i=0}^{N-1} \! \ell({x}_{i \mid k}, {u}_{i \mid k}) \! + \! V_f({x}_{N \mid k}) \! \Bigg] \! \Bigg\} \label{eq:ocp_cost} \\
    \text{s.~t.} \qquad & \forall i \in \mathcal{I}_{0:N-1}: \nonumber \\
                        & ~~ {x}_{i+1 \mid k} = A {x}_{i \mid k} + B {u}_{i \mid k} + {w}_{i \mid k}, {x}_{0 \mid k} = x_k \label{eq:ocp_dynamics} \\
                        & ~~ {w}_{i \mid k} \sim \mathcal{P}^w, \mathcal{P}^w \in \mathcal{A}^w \label{eq:ocp_disturbance} \\
                        & ~~ {u}_{i \mid k} \in \mathcal{U} \label{eq:ocp_input_constraints} \\
                        & ~~ \bbP[{x}_{i \mid k} \in \mathcal{X}] \geq 1 - \delta \label{eq:ocp_chance_constraints} \\
                        & \bbP[{x}_{N \mid k} \in \mathcal{X}_f] \geq 1 - \delta. \label{eq:ocp_chance_constraint_terminal}
\end{align}
\end{subequations}
Therein, $N \in \mathbb{N}, N < \infty$, is the (prediction and control) horizon and subscript ${i \mid k}$ denotes a prediction $i$ steps ahead of $k$.
The cost function $J_N(\mathbf{{u}}_k, \mathbf{{x}}_k)$ in \eqref{eq:ocp_cost}, consisting of the stage cost $\ell: \R{\n{x}} \times \R{\n{u}} \to \R{}$ and the terminal cost $V_f: \R{\n{x}} \to \R{}$, is minimized with respect to the input sequence $\mathbf{{u}}_k = \begin{bmatrix} {u}_{0 \mid k}, \ldots, {u}_{N-1 \mid k} \end{bmatrix}$.
The state sequence $\mathbf{{x}}_k = \begin{bmatrix} {x}_{0 \mid k}, \ldots, {x}_{N \mid k} \end{bmatrix}$ is determined via the dynamics constraint \eqref{eq:ocp_dynamics}, initialized at the current state $x_k$, whereupon \eqref{eq:ocp_disturbance} accounts for the nature of the disturbances.
The state predictions ${x}_{i \mid k} \sim \mathcal{P}^x_i$ are random vectors, which is accounted for by the expectation operator in \eqref{eq:ocp_cost} and the chance constraint formulations \eqref{eq:ocp_chance_constraints} and \eqref{eq:ocp_chance_constraint_terminal}, where $\mathcal{X}_f \subseteq \mathcal{X}, 0 \in \mathrm{int}(\mathcal{X}_f),$ is the terminal region. 
The input constraints \eqref{eq:ocp_input_constraints} can be enforced deterministically in the considered set-up.
A similar probabilistic treatment is necessary when employing an ancillary, pre-stabilizing controller \cite{Hewing2020b}.

Given the infinite-dimensional chance constraints \eqref{eq:ocp_chance_constraints} and \eqref{eq:ocp_chance_constraint_terminal}, the optimal control problem \eqref{eq:ocp} is computationally intractable.
To remedy from this situation, the chance constraints are reformulated as tightened constraints on the nominal dynamics \eqref{eq:nominal_system}, exploiting probabilistic reachable sets of the error dynamics \eqref{eq:error_system}.
\begin{definition}\label{def:i_step_PRS}(From \cite{Hewing2020a}).
    A set $\mathcal{R}_i^{1-\delta}$ is said to be an $i$-step probabilistic reachable set of probability level $1-\delta$ for the error dynamics \eqref{eq:error_system} if
    \begin{equation}
        \bbP[e_i \in \mathcal{R}_i^{1-\delta} \mid e_0 = 0] \geq 1 - \delta.
    \end{equation}
\end{definition}
\noindent Particularly, $\bbP[\hat{x}_{i \mid k} \in \mathcal{X}] \geq 1- \delta$ is equivalent to $z_{i \mid k} \in \mathcal{Z}_i$ for $\mathcal{Z}_i := \mathcal{X} \ominus \mathcal{R}_i^{1 - \delta}$ \cite{Hewing2020a}.\footnote{Since system \eqref{eq:system} is time-invariant and $w_k$ is i.i.d. over time, the sets $\mathcal{Z}_{i \mid k} = \mathcal{Z}_i, \mathcal{R}^{1-\delta}_{i \mid k} = \mathcal{R}_i^{1-\delta}$ are independent of $k$ and $\hat{x}_{0 \mid k} = x_k$.}
Given $\mathcal{X}$ according to \eqref{eq:state_constraints} and exploiting Boole's inequality, we find that \cite{Hewing2020a}
\begin{align}\label{eq:tightened_constraints}
    \mathcal{Z}_i = \left\{ z \mid H^{x}_{j:} z \leq h^{x}_j - \psi_j \sqrt{{H^{x}_{j:}} \Sigma^{x}_i {H^{x}_{j:}}^\top}, j \in \mathcal{I}_{1 : \n{c}^{x}} \right\}
\end{align}
for $i \in \mathcal{I}_{0:N-1}$, where $H^x_{j:}$ and $h^x_j$ denote the $j^\text{th}$ row of $H^x$ and the $j^\text{th}$ entry of $h^x$, respectively. 
The tightening factors $\psi_j$ are given by $\psi_j = \Phi_x^{-1}\big(1 - \frac{\delta}{\n{c}^x}\big)$ if the state distribution $\mathcal{P}^x$ and its (standardized) inverse cumulative density function $\Phi^{-1}$ are exactly known \cite{Ono2008}, or by $\psi_j = \sqrt{\frac{\n{c}^x-\delta}{\delta}}$ if only $\mathcal{P}^x_i \in \mathcal{A}^x_i$ is known \cite{Calafiore2006}.
The computationally tractable approximation of \eqref{eq:ocp} then reads
\begin{subequations}\label{eq:smpc}
\begin{align}
    \min_{\mathbf{{u}}_k} \Bigg\{ \! J_N&(\mathbf{{u}}_k, \mathbf{{x}}_k) \! = \! \E \! \Bigg[ \sum_{i=0}^{N-1} \! \ell({x}_{i \mid k}, {u}_{i \mid k}) \! + \! V_f({x}_{N \mid k}) \! \Bigg] \! \Bigg\} \label{eq:smpc_cost} \\
    \text{s.~t.} \qquad & \forall i \in \mathcal{I}_{0:N-1}: \nonumber \\
                        & ~~ z_{i+1 \mid k} = A z_{i \mid k} + B {u}_{i \mid k}, z_{0 \mid k} = x_k \label{eq:smpc_dynamics} \\
                        & ~~ {u}_{i \mid k} \in \mathcal{U}, z_{i \mid k} \in \mathcal{Z}_i \label{eq:smpc_constraints} \\
                        & z_{N \mid k} \in \mathcal{Z}_f. \label{eq:smpc_terminal_constraint}
\end{align}
\end{subequations}
Herein, $\mathcal{Z}_f \subseteq \mathcal{X}_f \ominus \mathcal{R}_N^{1-\delta}$ is a suitably chosen terminal region.
Problem \eqref{eq:smpc} exhibits a solution, i.e., is feasible for $x_k$, if and only if there exists at least one input sequence such that all constraints are satisfied. 
We call such an input sequence admissible according to the following definition.

\begin{definition}(From \cite{Grune2017}).
    An input sequence $\mathbf{u}$ is called admissible for optimal control problem \eqref{eq:smpc}, state $x \in \mathcal{X}$ and horizon $N$, if $\forall i \in \mathcal{I}_{0:N-1}: u_i \in \mathcal{U}, z_{i} \in \mathcal{Z}_{i}$ and $z_N \in \mathcal{Z}_f$.
    The set of admissible control sequences for state $x \in \mathcal{X}$ and horizon $N$ is denoted by $\mathbb{U}^N(x)$.
\end{definition}

\noindent We define the set of states for which problem \eqref{eq:smpc} is feasible as follows.

\begin{definition}(From \cite{Grune2017}).
    The set of feasible initial conditions for MPC \eqref{eq:smpc} is given by $\mathcal{X}_0 := \{ x \in \mathcal{X} \mid \mathbb{U}^N(x) \neq \emptyset \}$.
\end{definition}

\noindent Given that $x_k \in \mathcal{X}_0$, we denote the optimal input sequence obtained from solving \eqref{eq:smpc} by ${\mathbf{u}}_k^*$.
Therefrom, the first element ${u}_{0 \mid k}$ is applied to the system, which defines the MPC policy
\begin{equation}\label{eq:mpc_law}
    \mu_\mathrm{MPC}(x_k) = {u}_{0 \mid k}.
\end{equation}
The corresponding optimal value of \eqref{eq:smpc} is defined as follows.

\begin{definition}\label{def:optimal_value}
    The optimal value function $V_N(x_k)$ of \eqref{eq:smpc} is given by
    \begin{align}\label{eq:VN_MPC}
        V_N(x_k) = \begin{cases} J_N(\mathbf{u}_k^*, x_k) & \text{if~} x_k \in \mathcal{X}_0 \\
        \infty & \text{if~} x_k \notin \mathcal{X}_0 \end{cases}.
    \end{align}
\end{definition}

\subsection{Remarks on Feasibility and Stability of SMPC}\label{sec:feasibility_stability_remarks}

Since we consider unbounded, stochastic disturbances $w_k$, it is impossible to guarantee that, if $x_k \in \mathcal{X}_0$, applying the MPC law \eqref{eq:mpc_law} will lead to $x_{k+1} \in \mathcal{X}_0$.
Moreover, existence of an admissible input sequence at time point $k$ does not imply that its execution will remain strictly safe along the horizon, making it impossible to exploit the optimal solution at time point $k$ to construct an admissible candidate input sequence for time point $k+1$ \cite{Kouvaritakis2010}.
Hence, strict guarantees on repeated feasibility of the stochastic MPC \eqref{eq:smpc} are impossible to obtain.
In consequence, a back-up controller $u_k = \mu_{\mathrm{BackUp}}(x_k)$ is required for computing suitable control actions if $x_k \notin \mathcal{X}_0$.
This results in the overall control strategy
\begin{align}\label{eq:control_policy_general}
    u_k = \begin{cases}
              \mu_\mathrm{MPC}(x_k)    & \text{if}~ x_k \in \mathcal{X}_0 \\
              \mu_\mathrm{BackUp}(x_k) & \text{if}~ x_k \notin \mathcal{X}_0
          \end{cases}.
\end{align}

Standard approaches for proving (robust) stability of MPC schemes rely on the explicit construction of an admissible input sequence $\mathbf{{u}}_{k+1}^+$ for the next time point $k+1$ based on the current optimal solution $\mathbf{{u}}_k^*$, see, e.g., \cite{Rawlings2019}.
Since constructing such a sequence that is guaranteed to be admissible for the next horizon is impossible (as discussed above), standard stability results do not apply.
To provide ISSp results for system \eqref{eq:system} under MPC \eqref{eq:smpc} in the following, we adopt the strategy proposed in \cite{Primbs2009}, circumventing the requirement of repeated feasibility of MPC \eqref{eq:smpc}.
Based thereon and for a suitable choice of the back-up controller, we will show ISSp of system \eqref{eq:system} under control law \eqref{eq:control_policy_general}.
\section{Probabilistically Input-to-State Stable SMPC}\label{sec:issp}

We first introduce the concept of ISSp.
Thereafter, ISSp of the system under the back-up controller is considered, followed by investigating ISSp under the MPC.
We finish by concluding ISSp of the system under the proposed control policy.

\subsection{Input-to-State Stability in Probability}

Preliminarily to introducing ISSp, we consider the characterization of the random disturbance $w_k$ via $L^p$ spaces.

\begin{definition}\label{def:Lp_norm}(From \cite{Culbertson2023}).
    A random vector $\chi \sim \mathcal{P}^\chi$ belongs to $L^p$ for some $p > 0$, denoted by $\chi \in L^p$, if it holds that
    \begin{equation}
        \| \chi \|_{L^p} := \mathbb{E}[ \| \chi \|^p]^\frac{1}{p} < \infty.
    \end{equation}
\end{definition}

Note that by Assumption \ref{as:disturbance}, $w_k \in L^2$ since $\Sigma^w \preceq_L \nu \mathbb{I}, 0 < \nu < \infty$ ensures $\Tr[\Sigma^w] < \infty$ and $\| w_k \|_{L^2} = \sqrt{\Tr[\Sigma^w]}$.
This further implies $w_k \in L^p$ for $p > 2$ \cite{Culbertson2023}.
Since $w_k$ is i.i.d. over time, and hence its $L^p$-norm is constant over time, we denote $\| w_k \|_{L^p} = \| w \|_{L^p}, \forall k,$ for clarity.
Given the disturbance $w_k \in L^p$, ISSp is defined as follows.

\begin{definition}\label{def:issp}(From \cite{Culbertson2023}).
    A system $x_{k+1} = f(x_k, w_k)$ is input-to-state stable in probability with respect to $L^p$, if for any $\varepsilon \in (0, 1)$, $M \in \mathbb{N}$ and $w_k \in L^p$, there exist functions $\beta \in \mathcal{KL}$ and $\rho \in \mathcal{K}$ such that
    \begin{align}
        \bbP[ \|x_{k+i} \| \leq \beta( \|x_k\|, i) + \rho(\| w \|_{L^p}), \forall i \! \leq \! M] \! \geq \! 1 \! - \! \varepsilon.
    \end{align}
\end{definition}

Loosely speaking, if a system $x_{k+1} = f(x_k, w_k)$ is ISSp, it converges for finite horizon $M$ and with probability arbitrarily close to 1 to a neighborhood of the equilibrium of the undisturbed dynamics.
The size of this neighborhoor is determined by the $L^p$-norm of the disturbance $w_k$.
ISSp can be characterized using ISSp Lyapunov functions.

\begin{definition}\label{def:issp_lyapunov}(From \cite{Culbertson2023}).
    A continuous function $V: \R{\n{x}} \to [0, \infty)$ is an ISSp Lyapunov function for $x_{k+1} = f(x_k, w_k)$ on $\Omega \subseteq \R{\n{x}}$ if there exist functions $\alpha_1, \alpha_2, \kappa \in \mathcal{K}_\infty$ and $\varphi \in \mathcal{K}$ such that
    \begin{align}
        &\alpha_1(\| x_k \|) \leq V(x_k) \leq \alpha_2(\| x_k \|), \label{eq:ISSp_lyapunov_function_bounds} \\
        &\E[V(x_{k+1}) \! - \! V(x_k) \mid x_k] \leq - \kappa(V(x_k)) + \varphi(\| w \|_{L^p})
    \end{align}
    hold for all $x_k \in \Omega$ and $w_k \in L^p$.
\end{definition}

\begin{proposition}\label{prop:issp_lyapunov}(From \cite{Culbertson2023}).
    If there exists an ISSp Lyapunov function for $x_{k+1} = f(x_k, w_k)$ on $\Omega$, then $x_{k+1} = f(x_k, w_k)$ is ISSp on $\Omega$.    
\end{proposition}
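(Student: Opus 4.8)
The plan is to turn the stochastic drift inequality of Definition~\ref{def:issp_lyapunov} into a high-probability bound on $V(x_{k+i})$ that is uniform over $i\le M$, and then to translate it to a bound on $\|x_{k+i}\|$ through the sandwich \eqref{eq:ISSp_lyapunov_function_bounds}. Fix $\varepsilon\in(0,1)$, $M\in\mathbb{N}$ and $w_k\in L^p$, set $r:=\varphi(\|w\|_{L^p})$, and note that if $r=0$ the drift condition makes $V(x_{k+\cdot})$ a deterministic Lyapunov sequence, so that a standard comparison lemma (cf.~\cite{Grune2017}) already gives the claim; hence assume $r>0$. The key observation is that, whenever $V(x_{k+i})>\ell:=\kappa^{-1}(2r)$, the drift inequality yields $\E[V(x_{k+i+1})\mid x_{k+i}]\le V(x_{k+i})-\tfrac12\kappa(V(x_{k+i}))<V(x_{k+i})-r$, i.e.\ $V(x_{k+\cdot})$ is a strict supermartingale above the sublevel set $\mathcal{B}:=\{x\in\Omega\mid V(x)\le\ell\}$. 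We use throughout that $\Omega$ is forward invariant along the closed loop (as in \cite{Culbertson2023}); if only the relevant sublevel sets of $V$ are assumed to lie in $\Omega$, the supermartingale estimates below keep the trajectory inside them with the stated probability, which closes the argument self-consistently.

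The heart of the proof is a two-phase analysis. In the transient phase, let $\tau:=\inf\{i\ge0\mid V(x_{k+i})\le\ell\}$ and $S_i:=V(x_{k+(i\wedge\tau)})$, a nonnegative supermartingale. Doob's (Ville's) maximal inequality gives $\bbP[\sup_{i\le M}S_i\ge\lambda]\le V(x_k)/\lambda$ for all $\lambda>0$, which bounds the transient peak by $\mathcal{O}(V(x_k)/\varepsilon)$ outside a small-probability event; moreover the strict drift above $\mathcal{B}$ yields $\E[S_{i+1}]\le\E[S_i]-r\,\bbP[\tau>i]$, hence a hitting-time bound $\bbP[\tau>i]\le c\,V(x_k)/(r\,i)$. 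Together these let the $V(x_k)$-dependent part of the envelope be written as a class-$\mathcal{KL}$ function $\tilde\beta(V(x_k),i)$ --- constant-order while the trajectory may still be outside $\mathcal{B}$, and decaying once $i$ is large enough that entry to $\mathcal{B}$ has, with high probability, already occurred (the indicator of ``not yet entered'' being over-bounded by a continuous decaying profile). In the steady-state phase, after $\tau$ the trajectory performs excursions out of $\mathcal{B}$; each such excursion is again a supermartingale started at most $\ell+r$ (in expectation) above $\ell$, so a further application of the maximal inequality bounds the post-$\tau$ part of $\sup_{i\le M}V(x_{k+i})$ by a quantity $\tilde\gamma(r)$ with $\tilde\gamma\in\mathcal{K}$, again outside a small-probability event. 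Choosing all the exceptional events with total probability at most $\varepsilon$ (union bound over the finitely many steps), we obtain $V(x_{k+i})\le\tilde\beta(V(x_k),i)+\tilde\gamma(r)$ for all $i\le M$ with probability at least $1-\varepsilon$, where $\tilde\beta,\tilde\gamma$ may depend on $\varepsilon$ and $M$, which Definition~\ref{def:issp} permits.

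It remains to lift this to the state. On the event above, \eqref{eq:ISSp_lyapunov_function_bounds} gives $\|x_{k+i}\|\le\alpha_1^{-1}(V(x_{k+i}))$, and using $V(x_k)\le\alpha_2(\|x_k\|)$ together with $\sigma(a+b)\le\sigma(2a)+\sigma(2b)$ for $\sigma\in\mathcal{K}_\infty$ we arrive at $\|x_{k+i}\|\le\beta(\|x_k\|,i)+\rho(\|w\|_{L^p})$ with $\beta(s,i):=\alpha_1^{-1}(2\tilde\beta(\alpha_2(s),i))\in\mathcal{KL}$ and $\rho(s):=\alpha_1^{-1}(2\tilde\gamma(\varphi(s)))\in\mathcal{K}$ (composition and inversion preserving these classes), which is precisely the ISSp estimate of Definition~\ref{def:issp}.

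The main obstacle is the transient/steady-state split forced by the nonlinearity of $\kappa$: since $\kappa$ is merely class-$\mathcal{K}_\infty$ (possibly sublinear), one cannot simply iterate the drift on $\E[V(x_{k+i})]$ --- this expectation need not even be nonincreasing, because a single rare large excursion can inflate it --- so both the decay in $i$ and the finite-horizon probability bound must be extracted directly from the supermartingale structure, and the estimates for the pre- and post-$\tau$ phases must be glued so that the cumulative failure probability over the horizon $M$ stays below $\varepsilon$.
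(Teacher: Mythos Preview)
The paper does not provide its own proof of this proposition: it is stated as a quoted result from \cite{Culbertson2023} and is used as a black box in the subsequent theorems. There is therefore nothing in the paper to compare your argument against.

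That said, your sketch follows a standard and essentially sound route for turning a stochastic drift condition into a finite-horizon high-probability ISS-type bound: stop the process at the first entry into the $\kappa^{-1}(2r)$-sublevel set, use Ville's inequality on the stopped supermartingale for the transient peak and the optional-stopping/drift argument for the hitting-time decay, then handle post-entry excursions by a second maximal-inequality step with a union bound over the at most $M$ steps, finally pushing the bound through $\alpha_1^{-1}$ and $\alpha_2$. The one place that deserves more care is the steady-state phase: the statement that ``each such excursion is a supermartingale started at most $\ell+r$ (in expectation)'' is only a bound on the conditional mean of the first step after leaving $\mathcal{B}$, not a deterministic starting value, so a direct application of Ville's inequality to the excursion requires an additional Markov-inequality step on that first jump (or a slightly different bookkeeping, e.g.\ bounding $\sup_{i\le M}V(x_{k+i})$ directly via $\sum_{i\le M}\bbP[V(x_{k+i})\ge\lambda]$ using the uniform bound on $\E[V(x_{k+i+1})\mid x_{k+i}\in\mathcal{B}]$). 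Since Definition~\ref{def:issp} permits $\beta$ and $\rho$ to depend on $\varepsilon$ and $M$, the resulting factor of $M$ in the union bound is harmless, and your outline goes through once this is made precise.
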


\subsection{ISSp under the Back-Up Controller}\label{sec:issp_autonomous}

We require that system \eqref{eq:system} is ISSp under the back-up controller, and appropriate back-up controller design is in general a challenging task under input constraints, see \cite{Brudigam2022, Paulson2017}.
In the following, we consider the uncontrolled version of system \eqref{eq:system} and its nominal counterpart, given by
\begin{align}
    x_{k+1} & = A x_k + w_k,~ \text{and} \label{eq:disturbed_autonomous_dynamics} \\
    z_{k+1} & = A z_k, \label{eq:nominal_autonomous_dynamics}
\end{align}
and restrict the analysis to asymptotically stable systems, for which $\mu_\mathrm{BackUp}(x_k) = 0$ is a valid choice.

\begin{assumption}\label{as:nominal_autonomous_stability}
    The autonomous nominal dynamics \eqref{eq:nominal_autonomous_dynamics} are asymptotically stable and there exists $P \in \R{\n{x} \times \n{x}}, P \succ 0, P^\top = P$ such that ${V}(z_k) = z_k^\top P z_k$ is a Lyapunov function for \eqref{eq:nominal_autonomous_dynamics}. 
    Thus, ${V}(z_{k+1}) - {V}(z_{k}) = z_{k}^\top A^\top P A z_k - z_k P z_k \leq 0$ and $A^\top P A - P \prec 0$.
\end{assumption}
\noindent We obtain the following stability assertion for system \eqref{eq:disturbed_autonomous_dynamics}.
\begin{theorem}\label{th:issp_autonomous_system}
    Under Assumptions \ref{as:disturbance} and \ref{as:nominal_autonomous_stability}, the autonomous disturbed system \eqref{eq:disturbed_autonomous_dynamics} is ISSp with respect to $L^2$ and ${V}(x_k) = x_k^\top P x_k$ is an ISSp Lyapunov function for \eqref{eq:disturbed_autonomous_dynamics}.
\end{theorem}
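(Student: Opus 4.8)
The plan is to verify directly that $V(x_k) = x_k^\top P x_k$ satisfies the two conditions of Definition \ref{def:issp_lyapunov} on $\Omega = \R{\n{x}}$ and then invoke Proposition \ref{prop:issp_lyapunov}; that $w_k \in L^2$ was already observed after Definition \ref{def:Lp_norm}. The sandwich bound \eqref{eq:ISSp_lyapunov_function_bounds} is immediate from $P \succ 0$: taking $\alpha_1(s) = \lambda_{\min}(P)\, s^2$ and $\alpha_2(s) = \lambda_{\max}(P)\, s^2$, both in $\mathcal{K}_\infty$, gives $\alpha_1(\|x_k\|) \leq V(x_k) \leq \alpha_2(\|x_k\|)$.

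For the drift condition I would expand, using \eqref{eq:disturbed_autonomous_dynamics},
\[
V(x_{k+1}) = x_k^\top A^\top P A x_k + 2 x_k^\top A^\top P w_k + w_k^\top P w_k,
\]
and take the conditional expectation given $x_k$. Since $x_k$ is a measurable function of $w_0,\dots,w_{k-1}$ and $w_k$ is i.i.d.\ and zero-mean by Assumption \ref{as:disturbance}, $w_k$ is independent of $x_k$, so the cross term drops: $\E[2 x_k^\top A^\top P w_k \mid x_k] = 2 x_k^\top A^\top P\, \E[w_k] = 0$. With $\E[w_k^\top P w_k] = \Tr[P\Sigma^w]$ this yields
\[
\E[V(x_{k+1}) - V(x_k) \mid x_k] = x_k^\top (A^\top P A - P) x_k + \Tr[P\Sigma^w].
\]

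It then remains to cast this in the template $-\kappa(V(x_k)) + \varphi(\|w\|_{L^2})$. Set $Q := P - A^\top P A$, which is positive definite by Assumption \ref{as:nominal_autonomous_stability}; then $x_k^\top(A^\top P A - P)x_k = -x_k^\top Q x_k \leq -\lambda_{\min}(Q)\|x_k\|^2 \leq -\tfrac{\lambda_{\min}(Q)}{\lambda_{\max}(P)} V(x_k)$, where the last step uses the upper bound in \eqref{eq:ISSp_lyapunov_function_bounds}. On the disturbance side, $\|w\|_{L^2}^2 = \Tr[\Sigma^w]$ under Assumption \ref{as:disturbance}, so $\Tr[P\Sigma^w] \leq \lambda_{\max}(P)\Tr[\Sigma^w] = \lambda_{\max}(P)\|w\|_{L^2}^2$. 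Hence $\kappa(s) := \tfrac{\lambda_{\min}(Q)}{\lambda_{\max}(P)} s$ and $\varphi(r) := \lambda_{\max}(P)\, r^2$ work, with $\kappa \in \mathcal{K}_\infty$ (note $\lambda_{\min}(Q)/\lambda_{\max}(P) \in (0,1]$ since $Q \preceq_L P$) and $\varphi \in \mathcal{K}$, so all hypotheses of Definition \ref{def:issp_lyapunov} hold for every $x_k \in \R{\n{x}}$ and $w_k \in L^2$, and Proposition \ref{prop:issp_lyapunov} delivers ISSp with respect to $L^2$.

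The calculation is essentially routine; the only places that call for care are (i) the justification that $\E[x_k^\top A^\top P w_k \mid x_k] = 0$, which genuinely relies on the i.i.d.\ (hence independence) structure in Assumption \ref{as:disturbance} rather than on mere zero mean, and (ii) rewriting $-\lambda_{\min}(Q)\|x_k\|^2$ as $-\kappa(V(x_k))$ and the constant $\Tr[P\Sigma^w]$ as $\varphi(\|w\|_{L^2})$ so that the inequality matches Definition \ref{def:issp_lyapunov} verbatim, which is exactly what makes Proposition \ref{prop:issp_lyapunov} applicable. I do not anticipate a substantive obstacle beyond bookkeeping with the comparison-function classes.
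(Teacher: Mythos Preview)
Your proof is correct and follows essentially the same route as the paper: expand the quadratic, use independence and zero mean to kill the cross term, then convert $-x_k^\top(P-A^\top PA)x_k$ into $-\kappa(V(x_k))$ via the eigenvalue ratio $\lambda_{\min}(P-A^\top PA)/\lambda_{\max}(P)$ (the paper writes this as $\sigma_{\min}^2((P-A^\top PA)^{1/2})/\sigma_{\max}^2(P^{1/2})$, which is the same thing for symmetric positive-definite matrices). If anything, your treatment is slightly tidier: you explicitly verify the sandwich bound \eqref{eq:ISSp_lyapunov_function_bounds}, and you bound $\Tr[P\Sigma^w] \leq \lambda_{\max}(P)\|w\|_{L^2}^2$ to exhibit $\varphi$ as a genuine class-$\mathcal{K}$ function of the scalar $\|w\|_{L^2}$, whereas the paper just labels the constant $\Tr(P\Sigma^w)$ as $\varrho(\|w\|_{L^2})$ without making this dependence explicit.
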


\begin{proof}
    Given $x_k$, we find that ${V}(x_k) = x_k^\top P x_k$ and ${V}(x_{k+1}) = (A x_k + w_k)^\top P (A x_k + w_k)$. 
    Thus,
    \begin{align*}
        \E [{V}(x_{k+1}) & - {V}(x_k) \mid x_k] \\
                                & = x_k^\top (A^\top P A - P) x_k + \E[w_k^\top P w_k] \\
                                & = - x_k^\top (P - A^\top P A) x_k + \Tr(P \Sigma^w) \\  
                                & = - \| (P - A^\top P A)^{\frac{1}{2}} x_k \|_2^2 + \Tr(P \Sigma^w) \\
                                & \leq - \sigma_\mathrm{min}^2 \big((P - A^\top P A)^{\frac{1}{2}} \big) \| x_k \|_2^2 + \Tr(P \Sigma^w).
    \end{align*} 
    Herein, $\sigma_\mathrm{min}(\cdot)$ denotes the minimum singular value.
    Since for ${V}(x_k) = x_k^\top P x_k = \| P^\frac{1}{2} x_k \|_2^2$ it holds that ${V}(x_k) \leq \sigma_\mathrm{max}^2\big(P^\frac{1}{2}\big) \| x_k \|_2^2$, we find that
    \begin{align*}
        \E [{V}&(x_{k+1}) - {V}(x_k) \mid x_k] \\
                       & \leq - \underbrace{\frac{\sigma_\mathrm{min}^2 \big((P \! - \! A^\top P A)^{\frac{1}{2}} \big)}{\sigma_\mathrm{max}^2 \big( P^\frac{1}{2} \big)} {V}(x_k)}_{=: \kappa( {V}(x_k) )} + \underbrace{\Tr(P \Sigma^w)}_{=: \varrho(\| w_k \|_{L^2})}. 
    \end{align*}
    Clearly, $\kappa \in \mathcal{K}_\infty$ and $\varrho \in \mathcal{K}$ and hence, $V$ is an ISSp Lyapunov function for \eqref{eq:disturbed_autonomous_dynamics}.
    Then, according to Proposition \ref{prop:issp_lyapunov}, \eqref{eq:disturbed_autonomous_dynamics} is ISSp.
\end{proof}


\subsection{ISSp under the Stochastic MPC}\label{sec:issp_mpc}

In order to prove stability of the stochastic MPC formulation, we require the following assumptions about the stochastic MPC \eqref{eq:smpc} to hold.
\begin{assumption}\label{as:quadratic_cost}
    The stage cost $\ell(x,u)$ and the terminal cost $V_f(x)$ are quadratic functions given by
    \begin{align}
        \ell(x,u) & = x^\top Q x + u^\top R u,~\text{and} \label{eq:stage_cost} \\
        V_f(x) & = x^\top Q_f x \label{eq:terminal_cost}
    \end{align}
    with $Q, Q_f \in \R{\n{x} \times \n{x}}$, $Q, Q_f \succ 0$, $Q = Q^\top$, $Q_f = Q_f^\top$, and $R \in \R{\n{u} \times \n{u}}$, $R \succ 0$, $R = R^\top$.
    Moreover, there exist $\alpha_1, \alpha_2, \alpha_3 \in \mathcal{K}_\infty$ such that
    \begin{align}
        \ell(x,u) & \geq \alpha_3( \| x \| ),~\text{and} \\
        \alpha_1( \| x \|) & \leq V_f(x) \leq \alpha_2( \| x \|).
    \end{align}
\end{assumption}

\begin{proposition}\label{prop:optimal_value_function}(Theorem 4 from \cite{Sakizlis2007}). 
Under Assumption \ref{as:quadratic_cost}, the optimal value function \eqref{eq:VN_MPC} of MPC \eqref{eq:smpc} is continuous, convex, and piecewise quadratic on $\mathcal{X}_0$. Further, the MPC law \eqref{eq:mpc_law} is continuous and piecewise affine, and the set $\mathcal{X}_0$ is convex.
\end{proposition}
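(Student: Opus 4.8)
The plan is to recognize \eqref{eq:smpc} as a multiparametric quadratic program (mp-QP) in which the current state $x_k$ is the parameter and the input sequence $\mathbf{u}_k$ is the decision variable, and then to verify that the hypotheses of Theorem 4 in \cite{Sakizlis2007} (the standard mp-QP result) hold. Three reductions are needed: (i) eliminate the nominal state predictions in favor of $(x_k,\mathbf{u}_k)$; (ii) show that the objective \eqref{eq:smpc_cost} is, up to an additive term independent of $\mathbf{u}_k$, a jointly convex quadratic in $(x_k,\mathbf{u}_k)$ that is strictly convex in $\mathbf{u}_k$; and (iii) show that the feasible set is a polyhedron in $(x_k,\mathbf{u}_k)$.

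For (i), unrolling \eqref{eq:smpc_dynamics} from $z_{0\mid k}=x_k$ yields $z_{i\mid k}=A^{i}x_k+\sum_{j=0}^{i-1}A^{\,i-1-j}Bu_{j\mid k}$, so the stacked nominal trajectory is an affine function of $(x_k,\mathbf{u}_k)$. For (ii), I would write $x_{i\mid k}=z_{i\mid k}+e_{i\mid k}$ with $\E[e_{i\mid k}]=0$ and $\E[e_{i\mid k}e_{i\mid k}^{\top}]=\Sigma^{x}_i$; then the expectation in \eqref{eq:smpc_cost} separates into $\sum_{i=0}^{N-1}\big(z_{i\mid k}^{\top}Qz_{i\mid k}+u_{i\mid k}^{\top}Ru_{i\mid k}\big)+z_{N\mid k}^{\top}Q_fz_{N\mid k}+c$, where $c=\sum_{i=0}^{N-1}\Tr(Q\Sigma^{x}_i)+\Tr(Q_f\Sigma^{x}_N)$ depends only on $A$ and $\Sigma^w$ and is therefore independent of $\mathbf{u}_k$. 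Substituting the affine expression from (i) renders the objective of the form $\mathbf{u}_k^{\top}H\mathbf{u}_k+2x_k^{\top}F\mathbf{u}_k+x_k^{\top}Yx_k+c$; since $R\succ0$ forces $H\succ0$ while $Q,Q_f\succ0$ keep the joint Hessian positive semidefinite, this objective is jointly convex and strictly convex in $\mathbf{u}_k$, so for each feasible $x_k$ the minimizer is unique and $\mu_\mathrm{MPC}$ and $V_N$ are single-valued. For (iii), after eliminating the nominal states each of $u_{i\mid k}\in\mathcal{U}$, $z_{i\mid k}\in\mathcal{Z}_i$ (the polyhedron \eqref{eq:tightened_constraints}) and $z_{N\mid k}\in\mathcal{Z}_f$ becomes a finite family of affine inequalities in $(x_k,\mathbf{u}_k)$, so the feasible set is a polyhedron; consequently $\mathcal{X}_0$ is its projection onto the $x_k$-coordinates, hence a convex polyhedron, and $\mathbb{U}^N(x_k)$ is the corresponding section.

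With these reductions done, \eqref{eq:smpc} is a strictly convex mp-QP over a polyhedral feasible set, so the cited results apply directly: $\mathcal{X}_0$ is partitioned into finitely many polyhedra on each of which the optimizer $\mathbf{u}_k^{*}(x_k)$ is affine in $x_k$, while being globally continuous on $\mathcal{X}_0$, whence its first block $\mu_\mathrm{MPC}(x_k)=u_{0\mid k}^{*}(x_k)$ is continuous and piecewise affine; $V_N(x_k)=J_N(\mathbf{u}_k^{*}(x_k),x_k)$, being the partial minimization of the jointly convex objective, is convex on $\mathcal{X}_0$, and it is continuous and piecewise quadratic there (the additive constant $c$ only shifts it); and the convexity of $\mathcal{X}_0$ was established in (iii). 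These are exactly the assertions of the proposition.

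The step I expect to be the main obstacle is the bookkeeping in (ii): one must confirm that the stochastic cost \eqref{eq:smpc_cost}, written in terms of the random predictions $x_{i\mid k}$, genuinely collapses to a deterministic convex quadratic in the decision variables plus a term that does not depend on $\mathbf{u}_k$, since this is what legitimizes invoking the mp-QP machinery; here both Assumption \ref{as:disturbance} (finite, time-invariant covariance) and Assumption \ref{as:quadratic_cost} (quadratic, positive-definite weights) enter. The remaining steps are routine linear algebra together with a direct appeal to \cite{Sakizlis2007}.
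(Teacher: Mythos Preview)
The paper does not prove this proposition; it is stated with a citation to \cite{Sakizlis2007} and used as a black box. Your proposal supplies exactly the standard argument that underlies such results: reduce \eqref{eq:smpc} to a strictly convex multiparametric QP in $(x_k,\mathbf{u}_k)$ and invoke the mp-QP structure theorem. Steps (i)--(iii) are correct, and your observation that the stochastic cost collapses to a deterministic quadratic plus the constant $c=\sum_{i=0}^{N-1}\Tr(Q\Sigma^{x}_i)+\Tr(Q_f\Sigma^{x}_N)$ is precisely the decomposition the paper itself carries out a few lines later in \eqref{eq:VN_decomposition}.

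One small caveat: in step (iii) you implicitly treat $\mathcal{Z}_f$ as polyhedral, which is what is needed for the feasible set to be a polyhedron and hence for the piecewise-affine/piecewise-quadratic conclusions to hold in the classical mp-QP sense. The paper only says $\mathcal{Z}_f\subseteq\mathcal{X}_f\ominus\mathcal{R}_N^{1-\delta}$ is ``suitably chosen'' and does not make this explicit (indeed, the numerical example uses an ellipsoidal sublevel set of $V_f$). This is a gap in the paper's own statement rather than in your argument, but it is worth flagging: your reduction is airtight only under the additional, unstated assumption that $\mathcal{Z}_f$ is polyhedral.
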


Assumption \ref{as:quadratic_cost} enables to explicitly state the cost function $J_N(\mathbf{u}_k, \mathbf{x}_k)$ in terms of a nominal part and an uncertainty-related part as
\begin{align}
    J_N(\mathbf{u}_k, \mathbf{x}_k) & = \sum_{i=0}^{N-1} \| z_{i \mid k} \|_Q^2 + \| u_{i \mid k} \|_R^2 + \Tr(Q \Sigma_i^x) \nonumber \\
                                    & \qquad \qquad \qquad + \| z_{N \mid k} \|_{Q_f}^2 + \Tr(Q_f \Sigma_N^x) \\
                                    & = \hat{J}_N(\mathbf{u}_k, \mathbf{z}_k) + c, 
\end{align}
where the uncertainty-related part $c = \sum_{i=0}^{N-1} \Tr(Q \Sigma_i^x) + \Tr(Q_f \Sigma_N^x)$ is constant over time since system \eqref{eq:system} is time-invariant and $w_k$ is i.i.d. over time.
Thus, the optimal value function \eqref{eq:VN_MPC} can be decomposed over $\mathcal{X}_0$ into
\begin{align}\label{eq:VN_decomposition}
    V_N(x_k) = \hat{V}_N(x_k) + c,
\end{align}
where $\hat{V}_N(x_k) = \hat{J}_N(\mathbf{u}_k^*, \mathbf{z}_k^*)$ and $c$ is as before.
In consequence of \eqref{eq:VN_decomposition} and Proposition \ref{prop:optimal_value_function}, the nominal optimal value function $\hat{V}_N(x_k)$ is continuous, convex and piecewise quadratic.

\begin{assumption}\label{as:standard_stability}
    There exists a terminal controller $u = -K_f z$ with gain matrix $K_f \in \R{\n{u} \times \n{x}}$ such that
    \begin{enumerate}
        \item[(i)] $\forall z \in \mathcal{Z}_f: u = -K_f z \in \mathcal{U}$,
        \item[(ii)] the terminal region $\mathcal{Z}_f$ is forward invariant for the nominal system \eqref{eq:nominal_system} under the terminal controller, i.e., $\forall z \in \mathcal{Z}_f: (A-BK_f)z \in \mathcal{Z}_f$, and
        \item[(iii)] the terminal cost \eqref{eq:terminal_cost} is a local Lyapunov function on $\mathcal{Z}_f$ for the nominal system \eqref{eq:nominal_system} under the terminal controller, i.e., $\forall z \in \mathcal{Z}_f$:
        \begin{equation}
            V_f((A-BK_f)z) - V_f(z) \leq - \ell(z, -K_f z).
        \end{equation}
    \end{enumerate}
\end{assumption}

The quadratic stage and terminal cost functions, as specified by Assumption \ref{as:quadratic_cost}, are standard choices in MPC for a wide variety of regulation (and in extension also tracking) problems.
Further, Assumption \ref{as:standard_stability} is a standard assumption in MPC, see, e.g., \cite{Rawlings2019, Mayne2019}.
Note that Assumption \ref{as:standard_stability} is not restrictive since it is formulated with respect to the nominal dynamics, for which methods to compute a terminal set, cost function and controller with the required properties are available, see, e.g., \cite{Maiworm2015}.

For proving ISSp of system \eqref{eq:system} when applying MPC \eqref{eq:smpc} on $\mathcal{X}_0$, we make use of the Lyapunov function candidate
\begin{align}\label{eq:lyapunov_candidate}
    \Tilde{V}_N(x_k) = \begin{cases} \hat{V}_N(x_k) & \text{if~} x_k \! \in \! \mathcal{X}_0 \\ \max_{a \in [0,1]} \{ \hat{V}_N(ax_k) \mid ax_k \! \in \! \mathcal{X}_0 \} & \text{if~} x_k \! \notin \! \mathcal{X}_0 \end{cases}.
\end{align}
Note that \eqref{eq:lyapunov_candidate} coincides with the nominal optimal value function $\hat{V}_N$ if $x_k \in \mathcal{X}_0$. 
We exploit $\hat{V}_N$ instead of $V_N$ directly to achieve compliance of \eqref{eq:lyapunov_candidate} with \eqref{eq:ISSp_lyapunov_function_bounds} in the following.
Since we have to account for the possibility that $x_{k+1}$ becomes infeasible, $\Tilde{V}_N$ is a continuous extension of $\hat{V}_N$.
Particularly, if $x_{k+1} \notin \mathcal{X}_0$, the value of $\Tilde{V}_N(x_{k+1})$ is given by the value $\hat{V}_N(\partial x_{k+1})$, where $\partial x_{k+1} \in \partial \mathcal{X}_0$ is the point on the boundary of $\mathcal{X}_0$ that simultaneously lies on the straight line connecting the origin and $x_{k+1}$.
This point, $\partial x_{k+1} \in \partial \mathcal{X}_0$, is unique since $\mathcal{X}_0$ and $\hat{V}_N$ (on $\mathcal{X}_0$) are convex by Proposition \ref{prop:optimal_value_function}.
Furthermore, by Proposition \ref{prop:optimal_value_function}, $\hat{V}_N$ is continuous on $\mathcal{X}_0$, including $\partial \mathcal{X}_0$, which implies continuity of $\Tilde{V}_N$.
Using \eqref{eq:lyapunov_candidate}, we can state the following stability result.

\begin{theorem}\label{th:issp_controlled_system}
    Let $x_k \in \mathcal{X}_0$ be feasible for the MPC defined by \eqref{eq:smpc}. Then, under Assumptions \ref{as:disturbance}, \ref{as:quadratic_cost} and \ref{as:standard_stability}, $\Tilde{V}_N$ as defined in \eqref{eq:lyapunov_candidate} is an ISSp Lyapunov function for system \eqref{eq:system} under MPC law \eqref{eq:mpc_law} on $\mathcal{X}_0$.
\end{theorem}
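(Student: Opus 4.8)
The plan is to verify that the function $\tilde V_N$ in \eqref{eq:lyapunov_candidate} satisfies the two defining inequalities of an ISSp Lyapunov function on $\Omega = \mathcal{X}_0$ with respect to $L^2$: the sandwich bound \eqref{eq:ISSp_lyapunov_function_bounds} and the expected-decrease condition of Definition~\ref{def:issp_lyapunov}. On $\mathcal{X}_0$ we have $\tilde V_N = \hat V_N$, so the bounds reduce to bounds on the nominal optimal value function. The lower bound follows by keeping only the first stage cost in $\hat J_N$ and discarding the remaining nonnegative terms: since $z_{0\mid k}=x_k$, one gets $\hat V_N(x_k) \geq \ell(x_k, u_{0\mid k}^*) \geq \alpha_3(\|x_k\|)$, so $\alpha_1 := \alpha_3 \in \mathcal{K}_\infty$ works. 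For the upper bound I would use the standard MPC argument: on the terminal region the all-terminal-controller input sequence is admissible (by Assumption~\ref{as:standard_stability}(i)--(ii) and $\mathcal{Z}_f \subseteq \mathcal{Z}_i$), so $\hat V_N(x) \leq V_f(x) \leq \alpha_2(\|x\|)$ there; outside the terminal region $\|x\|$ is bounded away from zero, while the continuous, piecewise-quadratic $\hat V_N$ (Proposition~\ref{prop:optimal_value_function}) is bounded on the compact polytope $\mathcal{X}_0$, which yields a linear bound. Taking the maximum gives a single $\alpha_2 \in \mathcal{K}_\infty$.

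The heart of the proof is the expected-decrease inequality. Let $x_k \in \mathcal{X}_0$ with optimizer $\mathbf{u}_k^*$ and nominal trajectory $z_{0\mid k}^* = x_k, z_{1\mid k}^*, \dots, z_{N\mid k}^*$; note that the closed-loop state satisfies $x_{k+1} = A x_k + B u_{0\mid k}^* + w_k = z_{1\mid k}^* + w_k$. First I would establish the nominal descent $\hat V_N(z_{1\mid k}^*) \leq \hat V_N(x_k) - \ell(x_k,u_{0\mid k}^*) \leq \hat V_N(x_k) - \alpha_3(\|x_k\|)$. To this end, the shifted sequence $(u_{1\mid k}^*,\dots,u_{N-1\mid k}^*,-K_f z_{N\mid k}^*)$ is admissible for the initial state $z_{1\mid k}^*$: from $\Sigma^x_0 = 0$ and $\Sigma^x_{i+1} - \Sigma^x_i = A(\Sigma^x_i - \Sigma^x_{i-1})A^\top$ one obtains $\Sigma^x_i \preceq_L \Sigma^x_{i+1}$, hence $\mathcal{Z}_{i+1} \subseteq \mathcal{Z}_i$, so each shifted nominal state $z_{i+1\mid k}^* \in \mathcal{Z}_{i+1} \subseteq \mathcal{Z}_i$, while the terminal part is covered by Assumption~\ref{as:standard_stability}(i)--(ii); in particular $z_{1\mid k}^* \in \mathcal{X}_0$. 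Comparing $\hat V_N(z_{1\mid k}^*)$ with this candidate and using the telescoping estimate from Assumption~\ref{as:standard_stability}(iii) gives the claimed descent.

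Second, I would control the perturbation $x_{k+1} = z_{1\mid k}^* + w_k$. The key technical lemma is a global quadratic majorization of the continuous extension: there is a finite constant $L$ such that, for every $z \in \mathcal{X}_0$ and every $v \in \R{\n{x}}$,
\begin{equation*}
  \tilde V_N(z+v) \leq \hat V_N(z) + s(z)^\top v + \tfrac{L}{2}\|v\|^2 ,
\end{equation*}
where $s(z)$ is a subgradient of the convex function $\hat V_N$ at $z$. This uses that $\hat V_N$ is convex and piecewise quadratic on $\mathcal{X}_0$ (Proposition~\ref{prop:optimal_value_function}), so $L$ can be chosen as the largest curvature among its finitely many pieces, together with the observation that the radial extension in \eqref{eq:lyapunov_candidate} never exceeds the quadratic continuation of the active boundary piece because $\hat V_N$ is nondecreasing along rays from the origin. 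Applying the lemma with $z = z_{1\mid k}^*$, $v = w_k$, taking the conditional expectation, and using $\E[w_k \mid x_k] = 0$ and $\E[\|w_k\|^2] = \Tr(\Sigma^w) = \|w\|_{L^2}^2$ (Assumption~\ref{as:disturbance}) yields $\E[\tilde V_N(x_{k+1}) \mid x_k] \leq \hat V_N(z_{1\mid k}^*) + \tfrac{L}{2}\|w\|_{L^2}^2$. Combining this with the nominal descent and with $\alpha_3(\|x_k\|) \geq \alpha_3\big(\alpha_2^{-1}(\hat V_N(x_k))\big)$ from the first paragraph (recall $\tilde V_N(x_k)=\hat V_N(x_k)$ for $x_k \in \mathcal{X}_0$) gives $\E[\tilde V_N(x_{k+1}) - \tilde V_N(x_k)\mid x_k] \leq -\kappa(\tilde V_N(x_k)) + \varphi(\|w\|_{L^2})$ with $\kappa := \alpha_3 \circ \alpha_2^{-1} \in \mathcal{K}_\infty$ and $\varphi(r) := \tfrac{L}{2}r^2 \in \mathcal{K}$, which together with the bounds of the first paragraph is exactly Definition~\ref{def:issp_lyapunov}.

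I expect the main obstacle to be the quadratic majorization lemma: one must show that the continuous but only piecewise-quadratic, radially extended $\tilde V_N$ stays below a single fixed quadratic around every $z \in \mathcal{X}_0$, and in particular when $z = z_{1\mid k}^*$ lies on or near $\partial\mathcal{X}_0$ and $w_k$ points outward, so that the constant $L$ — and hence $\varphi$ — is uniform in both $x_k$ and the disturbance realization. The remaining ingredients (the sandwich bounds and the shifted-sequence descent) are standard MPC constructions adapted to the tightened nominal problem \eqref{eq:smpc}.
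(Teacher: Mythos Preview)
Your route differs substantially from the paper's. You first establish nominal descent $\hat V_N(z_{1|k}^*)\le \hat V_N(x_k)-\ell(x_k,u_{0|k}^*)$ by applying the shifted candidate at the \emph{nominal} successor $z_{1|k}^*$, and then control the perturbation $\tilde V_N(z_{1|k}^*+w_k)-\hat V_N(z_{1|k}^*)$ via a global quadratic majorization of $\tilde V_N$. The paper instead follows the Primbs strategy and evaluates the shifted candidate directly at the \emph{perturbed} state $x_{k+1}$: this candidate is admissible only on an event $\mathbb F$, on which one expands $\|z^*_{i+1|k}+A^i w_k\|_Q^2$ and uses $\E[w_k]=0$ to obtain the contribution $\sum_i \Tr(A^{i\top}QA^i\Sigma^w)$; on $\mathbb F^c$ one bounds $\tilde V_N\le\lambda:=\max_{\mathcal X_0}\hat V_N$ and controls $\bbP[x_{k+1}\notin\mathbb F]\le\Tr(S\Sigma^w)$ by a Chebyshev-type inequality with respect to an inner ellipsoid $\mathcal E\subseteq\mathbb F$ centered at $z_{1|k}^*$. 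Your decomposition is cleaner and would yield a tighter $\varphi$ when the majorization holds; the paper's $\mathbb F/\mathbb F^c$ split avoids any global smoothness requirement on $\tilde V_N$ at the price of the extra tail term $\lambda\,\sigma_{\max}^{\mathcal X_0}(S)\,\Tr(\Sigma^w)$.

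Your self-identified obstacle is real, and the sketch you give does not close it. Convexity and piecewise quadraticity of $\hat V_N$ do \emph{not} by themselves yield $\tilde V_N(z{+}v)\le \hat V_N(z)+s(z)^\top v+\tfrac L2\|v\|^2$: at a kink no single subgradient works in all directions with $L$ equal to the maximal piecewise curvature (try $f(x)=\max(0,x)$). For the inequality on $\mathcal X_0$ you actually need $C^1$-ness of $\hat V_N$, which is true here because the underlying parametric QP is strictly convex, but is stronger than what Proposition~\ref{prop:optimal_value_function} states. For $z+v\notin\mathcal X_0$, ``the radial extension never exceeds the quadratic continuation of the active boundary piece'' is not an argument: $\partial(z{+}v)$ lies on the ray from the origin through $z{+}v$, not on the segment $[z,z{+}v]$, so the quadratic at $z$ does not directly dominate $\hat V_N(\partial(z{+}v))$. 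One can try to repair this by writing $\partial(z{+}v)=a(z{+}v)$ with $a\in(0,1]$, applying $L$-smoothness between $z$ and $a(z{+}v)$, and using $s(z)^\top z\ge\hat V_N(z)\ge 0$ (from convexity and $\hat V_N(0)=0$) to make the first-order correction $(a{-}1)s(z)^\top z$ nonpositive; but the remaining second-order bookkeeping still has to be carried out uniformly in $z$ and $v$. If that cannot be closed, the paper's $\mathbb F/\mathbb F^c$ split is a safe fallback that sidesteps the lemma entirely.
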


\begin{proof}
    The first part of the proof follows the arguments used in \cite{Primbs2009}.
    Since $x_k \in \mathcal{X}_0$, MPC problem \eqref{eq:smpc} is feasible for $x_k$.
    Solving MPC \eqref{eq:smpc} for $x_k \in \mathcal{X}_0$ yields the optimal input sequence $\mathbf{\hat{u}}_k^*$, the corresponding nominal state sequence $\mathbf{z}_k^*$ and the nominal optimal value
    \begin{align}\label{eq:theorem2_VN_k}
        \hat{V}_N(x_k) & = \sum_{i=0}^{N-1} \ell(z_{i \mid k}^*, u_{i \mid k}^*) + V_f(z^*_{N \mid k}) \nonumber \\
                       & = \sum_{i=0}^{N-1} \| z_{i \mid k}^* \|_Q^2 + \| u_{i \mid k}^* \|_R^2 + \| z_{N \mid k}^* \|_{Q_f}^2.
    \end{align}
    Define $\mathbb{F}:= \{ z_{1 \mid k} \mid z_{0 \mid k} = x_k, \mathbf{u}_{1:N-1 \mid k} \in \mathbb{U}^{N-1}(z_{1 \mid k}) \}$, which is the set of all states $x_{k+1}$ for which the remaining input sequence $\mathbf{u}_{1:N-1 \mid k}^*$ is feasible at time point $k+1$ for horizon $N-1$ when $u^*_{0 \mid k}$ is applied at time point $k$.
    It follows from Assumption \ref{as:standard_stability}, that the input sequence $\mathbf{u}_{k+1}^+$ with
    \begin{align}\label{eq:theorem2_u_plus}
        u^+_{i \mid k+1} = \begin{cases} u_{i+1 \mid k}^* & \text{for~} i \in \mathcal{I}_{0:N-2} \\ -K_f z_{N-1 \mid k+1}^+ & \text{for~} i = N-1 \end{cases}
    \end{align}
    is feasible for $x_{k+1} \in \mathbb{F}$.\footnote{This refers to feasibility for MPC \eqref{eq:smpc} at time point $k+1$ and does not imply feasibility at later time steps when successively executing the input sequence, as discussed in Section \ref{sec:feasibility_stability_remarks}.}
    Thus, and since $u_{k+1}^+$ is not necessarily optimal, we find on the set $\mathbb{F}$ that
    \begin{align}
        \E[ \1{\bbF}\{\hat{V}_N(x_{k+1})\} ] & \leq \E[ \1{\bbF}\{ \hat{J}_N(x_{k+1}, \mathbf{u}_{k+1}^+) \} ] \nonumber \\
                                       & = \E[\hat{J}_N(x_{k+1}, \mathbf{u}_{k+1}^+)] \bbP[x_{k+1} \in \bbF] \nonumber \\
                                       & \leq \E[\hat{J}_N(x_{k+1}, \mathbf{u}_{k+1}^+)]
    \end{align}
    Analogously to \eqref{eq:theorem2_VN_k}, it holds that
    \begin{multline}\label{eq:theorem2_JN_kp1}
        \hat{J}_N(x_{k+1}, \mathbf{u}_{k+1}^+) = \sum_{i=0}^{N-1} \| z_{i \mid k+1}^+ \|_Q^2 + \| u_{i \mid k+1}^+ \|_R^2 \\ + \| z_{N \mid k+1}^+ \|_{Q_f}^2.
    \end{multline}
    Furthermore, exploiting that $z_{0 \mid k+1}^+ = x_{k+1} = z_{0 \mid k}^* + w_k$, we find by iterating the dynamics \eqref{eq:nominal_system} initialized at $x_{k+1}$ that
    \begin{align}\label{eq:theorem2_dynamics}
        z_{i \mid k+1}^+ = z_{i+1 \mid k}^* + A^i w_k.
    \end{align}
    Substituting \eqref{eq:theorem2_u_plus} and \eqref{eq:theorem2_dynamics} in  \eqref{eq:theorem2_JN_kp1} and using $\|z_{i \mid k+1}^* \|_Q \leq \| z_{i+1 \mid k}^* \|_Q + \| A^i w_k \|_Q$, where the latter is obtained from applying the triangle inequality, leads to
    \begin{align}
        & \E[ \1{\mathbb{F}} \{ \hat{V}_N(x_{k+1}) \} ] - \hat{V}_N(x_k) \nonumber \\ 
        & \leq \E \Bigg[- \ell(x_k, u_{0 \mid k}^*) +  \sum_{i=0}^{N-1} \|A^i w_k \|_Q^2 \nonumber \\ 
        & ~~~~+ \ell(z_{N-1 \mid k+1}^+, u_{N-1 \mid k+1}^+) + V_f(z_{N+1 \mid k}) - V_f(z_{N \mid k}) \Bigg]. \nonumber
    \end{align}
    As for $x_{k+1} \in \bbF$, it holds that $z_{N-1 \mid k+1} \in \mathcal{Z}_f$, we have by Assumption \ref{as:standard_stability} that $\ell(z_{N-1 \mid k+1}^+, u_{N-1 \mid k+1}^+) + V_f(z_{N+1 \mid k}) - V_f(z_{N \mid k}) \leq 0$.
    Thus,
    \begin{align}\label{eq:theorem2_bound_F_1}
        \E[ \1{\mathbb{F}} \{ \hat{V}_N(&x_{k+1}) \} ] - \hat{V}_N(x_k) \nonumber \\
        & \leq \E \Bigg[- \ell(x_k, u_{0 \mid k}^*) +  \sum_{i=0}^{N-1} \|A^i w_k \|_Q^2 \Bigg] \nonumber \\
        & = - \ell(x_k, u_{0 \mid k}^*) + \sum_{i=0}^{N-1} \Tr({A^i}^\top Q A^i \Sigma^w).
    \end{align}
    In consequence of Proposition \ref{prop:optimal_value_function}, and since $\hat{V}_N$ clearly satisfies \eqref{eq:ISSp_lyapunov_function_bounds}, there exists $\Lambda \in \R{\n{x} \times \n{x}}$, $\Lambda \succeq 0$ such that $V_N(x_k) \leq x_k^\top \Lambda x_k \leq \sigma_\mathrm{max}^2(\Lambda^\frac{1}{2}) \|x_k\|_2^2$.
    This implies that $\|x_k\| \geq \Tilde{\alpha}(V_N(x_k))$ for $\Tilde{\alpha} \in \mathcal{K}_\infty$.
    Since by Assumption \ref{as:quadratic_cost}, $\ell(x_k, u_{0 \mid k}^*) \geq \alpha_3( \| x_k \| )$, there exists $\kappa \in \mathcal{K}_\infty$ such that $\ell(x_k, u_{0 \mid k}^*) \geq \kappa(V_N(x_k))$.
    Exploiting further that for $x_k \in \mathcal{X}_0$ and $x_{k+1} \in \bbF \subset \mathcal{X}_0$, it holds that $\Tilde{V}_N(x_k) = \hat{V}_N(x_k)$ and $\Tilde{V}_N(x_{k+1}) = \hat{V}_N(x_{k+1})$, we find that
    \begin{multline}\label{eq:theorem2_bound_F_2}
        \E[ \1{\mathbb{F}} \{ \tilde{V}_N(x_{k+1}) \} ] - \tilde{V}_N(x_k) \\
        = - \kappa(\Tilde{V}_N(x_{k})) + \sum_{i=0}^{N-1} \Tr({A^i}^\top Q A^i \Sigma^w).
    \end{multline}
    Using the identity $\E[\Tilde{V}_N(x_{k+1})] = \E[\1{\bbF}\{ \Tilde{V}_N(x_{k+1}) \}] + \E[\1{\bbF^c}\{ \tilde{V}_N(x_{k+1}) \}]$, where $\bbF^c$ denotes the complement of $\bbF$ on $\R{\n{x}}$, and $\E[\1{\bbF^c}\{ \tilde{V}_N(x_{k+1}) \}] = \E[\tilde{V}_N(x_{k+1})] \bbP[x_{k+1} \notin \bbF]$ \eqref{eq:theorem2_bound_F_2} yields
    \begin{multline}\label{eq:theorem2_stability_1}
        \E[ \Tilde{V}_N(x_{k+1}) - \Tilde{V}_N(x_k) \mid x_k ] \leq - \kappa(\Tilde{V}_N(x_k)) \\ + \sum_{i=0}^{N-1} \Tr({A^i}^\top Q A^i \Sigma^w) + \lambda \bbP[x_{k+1} \notin \bbF],
    \end{multline}
    where $\lambda = \max_{x \in \mathcal{X}_0} \{ \hat{V}_N(x) \}$ is an upper bound of $\Tilde{V}_N(x)$ according to \eqref{eq:lyapunov_candidate}.
    It remains to bound $\bbP[x_{k+1} \notin \bbF]$.
    To this end, note that $z_{1 \mid k}^* \in \mathrm{int}( \bbF )$ by design of MPC \eqref{eq:smpc}.
    Thus, we rely on an inner ellipsoidal approximation of $\bbF$, given by $\mathcal{E} := \{ x \mid \|x - z_{1 \mid k}^* \|_S \leq 1 \}$ for $S \in \R{\n{x} \times \n{x}}$, $S \succ 0$ such that $\mathcal{E}, \subseteq \bbF$.
    Then, by arguments similar to those for proving the multidimensional Chebyshev inequality, we find that
    \begin{equation}\label{eq:theorem2_P_notin_F}
        \bbP[ \| x_{k+1} - z_{1 \mid k}^* \|_S > 1] \leq \Tr(S \Sigma^w).
    \end{equation}
    Since $\mathcal{E} \subseteq \bbF$, it holds that $\bbP[x_{k+1} \notin \bbF] \leq \bbP [ \| x_{k+1} - z_{1 \mid k}^* \|_S > 1]$ and it follows from \eqref{eq:theorem2_stability_1} and \eqref{eq:theorem2_P_notin_F} that
    \begin{multline}
        \E[ \Tilde{V}_N(x_{k+1}) - \Tilde{V}_N(x_k) \mid x_k ] \leq - \kappa(\Tilde{V}_N(x_k)) \\ + \sum_{i=0}^{N-1} \Tr({A^i}^\top Q A^i \Sigma^w) + \lambda \sigma_\mathrm{max}^{\mathcal{X}_0}(S)\Tr(\Sigma^w),
    \end{multline}
    where $\sigma_\mathrm{max}^{\mathcal{X}_0}(S)$ is the largest singular value of matrices $S$ that occur on $\mathcal{X}_0$.
    Noting that $\varphi(\| w \|_{L^2}) = \sum_{i=0}^{N-1} \Tr({A^i}^\top Q A^i \Sigma^w) + \lambda \sigma_\mathrm{max}^{\mathcal{X}_0}(S)\Tr(\Sigma^w)$ is a class $\mathcal{K}$ function concludes the proof.
\end{proof}
Hence, Theorem \ref{th:issp_controlled_system} shows that system \eqref{eq:system} under MPC law \eqref{eq:mpc_law} is ISSp as long as the system remains in $\mathcal{X}_0$.

\subsection{ISSp under the Combined Control Policy}

In view of Assumption \ref{as:nominal_autonomous_stability}, we employ the control policy \eqref{eq:control_policy_general} with back-up controller $\mu_\mathrm{BackUp}(x) = 0$, i.e.,
\begin{align}\label{eq:control_policy}
    u_k = \begin{cases} \mu_\mathrm{MPC}(x_k) & \text{if~} x_k \in \mathcal{X}_0 \\ 0 & \text{if~} x_k \notin \mathcal{X}_0 \end{cases}.
\end{align}
Combining the results from Sections \ref{sec:issp_autonomous} and \ref{sec:issp_mpc}, we can conclude about ISSp of system \eqref{eq:system} under control policy \eqref{eq:control_policy}.
To this end, we first define recurrence of a bounded set. 

\begin{definition}\label{def:recurrence}(From \cite{Culbertson2023}).
    For a bounded set $\mathcal{Y} \subset \R{\n{x}}$, define the hitting time as $\tau_\mathcal{Y} := \inf \{k \in \mathbb{N} \mid x_k \in \mathcal{Y}, x_0 = x \}$. 
    The set $\mathcal{Y}$ is recurrent if $\bbP[\tau_\mathcal{E} < \infty] = 1$ for all $x \in \R{\n{x}}$.
\end{definition}

Loosely speaking, if the set $\mathcal{Y}$ is recurrent for a system, then the system visits $\mathcal{Y}$ in finite time and infinitely often.
Exploiting Theorem \ref{th:issp_autonomous_system}, we can state the following result.

\begin{assumption}\label{as:V_subset_X0}
    There exists $\gamma > \frac{\Tr(P \Sigma^w) \sigma_\mathrm{max}^2 \big( P^\frac{1}{2} \big)}{\sigma_\mathrm{min}^2 \big( (P \! - \! A^\top P A)^\frac{1}{2} \big)}$ such that $\mathcal{V}_\gamma := \{x \in \R{\n{x}} \mid {V}(x) \leq \gamma \} \subset \mathcal{X}_0$.
\end{assumption}

\begin{lemma}\label{lemma:recurrence}
    Under Assumptions \ref{as:disturbance}, \ref{as:nominal_autonomous_stability} and \ref{as:V_subset_X0}, the set $\mathcal{X}_0$ is recurrent for the autonomous system \eqref{eq:disturbed_autonomous_dynamics}.
\end{lemma}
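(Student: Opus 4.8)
The plan is to convert the drift estimate obtained in the proof of Theorem~\ref{th:issp_autonomous_system} into a Foster--Lyapunov (supermartingale) argument for the hitting time of $\mathcal{V}_\gamma$, and then invoke the inclusion $\mathcal{V}_\gamma \subset \mathcal{X}_0$ from Assumption~\ref{as:V_subset_X0}. Recall that the proof of Theorem~\ref{th:issp_autonomous_system} establishes, for the autonomous disturbed dynamics \eqref{eq:disturbed_autonomous_dynamics} and ${V}(x) = x^\top P x$, the inequality
\[
    \E[{V}(x_{k+1}) - {V}(x_k) \mid x_k] \leq -\kappa({V}(x_k)) + \Tr(P \Sigma^w), \qquad \kappa(r) = \frac{\sigma_\mathrm{min}^2\big((P - A^\top P A)^{\frac{1}{2}}\big)}{\sigma_\mathrm{max}^2\big(P^{\frac{1}{2}}\big)}\, r ,
\]
with $\kappa \in \mathcal{K}_\infty$ (in particular strictly increasing). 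First I would observe that Assumption~\ref{as:V_subset_X0} is precisely the statement $\kappa(\gamma) > \Tr(P\Sigma^w)$. Hence, setting $\eta := \kappa(\gamma) - \Tr(P\Sigma^w) > 0$ and using that $x_k \notin \mathcal{V}_\gamma$ means ${V}(x_k) > \gamma$, one obtains the uniform negative drift $\E[{V}(x_{k+1}) - {V}(x_k) \mid x_k] \leq -\eta$ for all $x_k \notin \mathcal{V}_\gamma$.

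Next, fix an arbitrary $x_0 = x \in \R{\n{x}}$ and set $\tau := \tau_{\mathcal{V}_\gamma} = \inf\{k \in \mathbb{N} \mid x_k \in \mathcal{V}_\gamma\}$. I would introduce the stopped process $M_k := {V}(x_{k \wedge \tau}) + \eta\,(k \wedge \tau)$ and verify that $(M_k)_{k \geq 0}$ is a nonnegative supermartingale with respect to the natural filtration: on the $\mathcal{F}_k$-measurable event $\{\tau \leq k\}$ one has $M_{k+1} = M_k$ trivially, while on $\{\tau > k\}$ one has $x_k \notin \mathcal{V}_\gamma$ and $(k+1)\wedge\tau = (k\wedge\tau)+1$, so the negative-drift inequality gives $\E[M_{k+1} \mid \mathcal{F}_k] \leq M_k$ there as well. (Here $x_k$ is a Markov chain by the i.i.d.\ property of $w_k$ in Assumption~\ref{as:disturbance}, and the relevant expectations are finite since $w_k \in L^2$.) Consequently $\eta\,\E[k \wedge \tau] \leq \E[M_k] \leq \E[M_0] = {V}(x)$ for every $k$, and letting $k \to \infty$ by monotone convergence yields $\E[\tau] \leq {V}(x)/\eta < \infty$, hence $\bbP[\tau_{\mathcal{V}_\gamma} < \infty] = 1$.

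Finally, since $\mathcal{V}_\gamma \subset \mathcal{X}_0$ by Assumption~\ref{as:V_subset_X0}, any trajectory entering $\mathcal{V}_\gamma$ has already entered $\mathcal{X}_0$, so $\tau_{\mathcal{X}_0} \leq \tau_{\mathcal{V}_\gamma} < \infty$ almost surely for every $x_0 = x \in \R{\n{x}}$; by Definition~\ref{def:recurrence}, $\mathcal{X}_0$ is recurrent for \eqref{eq:disturbed_autonomous_dynamics}. The only delicate point is the supermartingale bookkeeping for the stopped process --- in particular, applying the drift inequality strictly on $\{\tau > k\}$, where $x_k \notin \mathcal{V}_\gamma$, and confirming step-wise integrability of ${V}(x_k)$; everything else is a direct consequence of Theorem~\ref{th:issp_autonomous_system} together with the defining inequality for $\gamma$ in Assumption~\ref{as:V_subset_X0}.
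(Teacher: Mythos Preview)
Your argument is correct. The overall structure coincides with the paper's: both show that $\mathcal{V}_\gamma$ is recurrent for \eqref{eq:disturbed_autonomous_dynamics} and then invoke the inclusion $\mathcal{V}_\gamma \subset \mathcal{X}_0$ from Assumption~\ref{as:V_subset_X0} to conclude recurrence of $\mathcal{X}_0$. The difference is in how recurrence of $\mathcal{V}_\gamma$ is obtained: the paper simply cites Theorem~5 of \cite{Culbertson2023}, whereas you give a self-contained Foster--Lyapunov argument, turning the drift bound of Theorem~\ref{th:issp_autonomous_system} into a uniform negative drift $\E[V(x_{k+1})-V(x_k)\mid x_k]\leq -\eta$ outside $\mathcal{V}_\gamma$ and running the stopped supermartingale $M_k=V(x_{k\wedge\tau})+\eta(k\wedge\tau)$ to get $\E[\tau_{\mathcal{V}_\gamma}]\leq V(x)/\eta<\infty$. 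This is essentially what the cited theorem encapsulates, so you are unpacking the black box rather than taking a genuinely different route. The benefit of your version is that it is elementary and does not require the reader to consult \cite{Culbertson2023}; the paper's version is shorter and keeps the lemma's proof to two lines. Your bookkeeping (integrability of $M_k$ by induction via nonnegativity and the supermartingale inequality, and applying the drift bound only on $\{\tau>k\}$) is handled correctly.
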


\begin{proof}
    It follows from Theorem 5 in \cite{Culbertson2023}, that $\mathcal{V}_\gamma$ is bounded and recurrent for system \eqref{eq:disturbed_autonomous_dynamics} for $\gamma$ as in Assumption \ref{as:V_subset_X0}. Since $\mathcal{V}_\gamma \subset \mathcal{X}_0$, $\mathcal{X}_0$ is recurrent for system \eqref{eq:disturbed_autonomous_dynamics}.
\end{proof}

Loosely speaking, if system \eqref{eq:system} leaves $\mathcal{X}_0$ at time step $k$, and we stop applying control inputs according to \eqref{eq:control_policy}, the uncontrolled system \eqref{eq:disturbed_autonomous_dynamics} will return to $\mathcal{X}_0$ in finite time.
This enables us to state the final result.

\begin{theorem}\label{th:issp_final}
    Under Assumptions \ref{as:disturbance}, \ref{as:nominal_autonomous_stability} and \ref{as:V_subset_X0}, system \eqref{eq:system} under control policy \eqref{eq:control_policy} is ISSp. 
\end{theorem}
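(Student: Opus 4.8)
The plan is to argue directly from Definition~\ref{def:issp}, combining the two regime-wise results through an excursion (switching) argument. I would first note that a single \emph{continuous} global ISSp Lyapunov function obtained by patching $\tilde V_N$ on $\mathcal{X}_0$ to $V(x)=x^\top Px$ on $\R{\n{x}}\setminus\mathcal{X}_0$ is not available: the two functions live on incompatible scales near $\partial\mathcal{X}_0$, since a trajectory crossing from outside into $\mathcal{X}_0$ may see $\tilde V_N$ rise up to $\lambda:=\max_{x\in\mathcal{X}_0}\tilde V_N(x)$, so no continuous combined function can satisfy the ISSp decrease inequality across the boundary. The three ingredients I would use are Theorem~\ref{th:issp_autonomous_system} (under the back-up regime $u_k=0$, $V(x)=x^\top Px$ is a global ISSp Lyapunov function, i.e.\ $\E[V(x_{k+1})\mid x_k]\le(1-c)V(x_k)+\Tr(P\Sigma^w)$ with $c:=\sigma_\mathrm{min}^2((P-A^\top PA)^{1/2})/\sigma_\mathrm{max}^2(P^{1/2})\in(0,1]$ whenever $x_k\notin\mathcal{X}_0$), Theorem~\ref{th:issp_controlled_system} ($\tilde V_N$ is an ISSp Lyapunov function for the MPC regime on $\mathcal{X}_0$), and Lemma~\ref{lemma:recurrence} ($\mathcal{V}_\gamma\subset\mathcal{X}_0$ is recurrent for the back-up regime).

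Next I would record the quantitative facts needed to bound one excursion. (a) If the MPC law is applied at $x_k\in\mathcal{X}_0$ but $x_{k+1}\notin\mathcal{X}_0$, then $x_{k+1}=z_{1\mid k}^{*}+w_k$ with $z_{1\mid k}^{*}\in\mathcal{X}_0$, hence $\E[V(x_{k+1})\mid x_k]\le v_0:=\max_{x\in\mathcal{X}_0}x^\top Px+\Tr(P\Sigma^w)<\infty$; every excursion thus starts from a bounded expected $V$-level. (b) Throughout an excursion one has $x_k\notin\mathcal{X}_0$, hence $x_k\notin\mathcal{V}_\gamma$, hence $V(x_k)>\gamma$; and by Assumption~\ref{as:V_subset_X0}, $\gamma>\Tr(P\Sigma^w)/c$, so on the excursion $\E[V(x_{k+1})\mid x_k]\le V(x_k)$, i.e.\ $V(x_k)$ is a nonnegative supermartingale. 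By Ville's maximal inequality the excursion's $V$-peak exceeds $\ell$ with probability at most $v_0/\ell$. (c) Combining (b) with Lemma~\ref{lemma:recurrence}, each excursion returns to $\mathcal{X}_0$ almost surely in finite time, so within the finite window $\{k,\dots,k+M\}$ there are at most $M$ alternating phases. (d) While $x_k\in\mathcal{X}_0$, Theorem~\ref{th:issp_controlled_system} gives $\E[\tilde V_N(x_{k+1})\mid x_k]\le\tilde V_N(x_k)-\kappa(\tilde V_N(x_k))+\varphi(\|w\|_{L^2})$ with $\kappa\in\mathcal{K}_\infty$, $\varphi\in\mathcal{K}$, and the sandwich bounds $\alpha_1(\|x\|)\le\tilde V_N(x)\le\alpha_2(\|x\|)$ hold on $\mathcal{X}_0$ (where moreover $\|x\|\le\max_{y\in\mathcal{X}_0}\|y\|$).

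Then I would assemble the estimate. Fix $\varepsilon\in(0,1)$, $M\in\mathbb{N}$ and $w\in L^p$ (so $w\in L^2$). On $\{k,\dots,k+M\}$ the trajectory splits into at most $M$ maximal phases alternating between ``inside $\mathcal{X}_0$'' and ``outside $\mathcal{X}_0$''. On each inside-phase I use (d), converting the near-supermartingale bound for $\tilde V_N$ into a bound on $\|x_{k+i}\|$ via the sandwich bounds and using $\kappa\in\mathcal{K}_\infty$ to extract a $\mathcal{KL}$-type contraction; on each outside-phase I use (a)--(c), converting the supermartingale bound for $V$ via the global quadratic sandwich bounds of $V$. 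Chaining the per-phase maximal inequalities with a union bound over the $\le M$ phases gives: with probability at least $1-\varepsilon$, $\|x_{k+i}\|\le\beta(\|x_k\|,i)+\rho(\|w\|_{L^p})$ for all $i\le M$, for some $\beta\in\mathcal{KL}$ and $\rho\in\mathcal{K}$ (allowed to depend on $\varepsilon$ and $M$). This is precisely Definition~\ref{def:issp}, so system~\eqref{eq:system} under control policy~\eqref{eq:control_policy} is ISSp.

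The main obstacle is the stitching across phase boundaries: one must control the state at the start of every new phase uniformly over the \emph{random} number and lengths of phases in $[0,M]$, which forces a change of certificate ($\tilde V_N\leftrightarrow V$) and hence bounded ``handover'' overshoots that have to be reabsorbed; this is exactly where the boundedness of $\mathcal{X}_0$ and the uniform post-exit bound $v_0$ in (a) enter. The second, related difficulty is to keep every additive error term a genuine $\mathcal{K}$-function of $\|w\|_{L^p}$, vanishing as $\Sigma^w\to0$, rather than an irreducible constant — here one uses the one-step escape estimate~\eqref{eq:theorem2_P_notin_F} together with Assumption~\ref{as:V_subset_X0}, which makes the excursions self-terminating with a disturbance-proportional footprint inside $\mathcal{V}_\gamma\subset\mathcal{X}_0$. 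If the appropriate composition lemma from \cite{Culbertson2023} is available (the toolbox behind Theorem~5 used for Lemma~\ref{lemma:recurrence}), a shorter route is to invoke it directly, with $\mathcal{X}_0$ as the recurrent set carrying the MPC ISSp certificate and \eqref{eq:disturbed_autonomous_dynamics} as the globally ISSp back-up regime.
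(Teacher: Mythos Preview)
Your approach is essentially the same as the paper's: combine the MPC-regime ISSp certificate (Theorem~\ref{th:issp_controlled_system}) on $\mathcal{X}_0$ with the back-up-regime ISSp certificate (Theorem~\ref{th:issp_autonomous_system}) outside, using recurrence of $\mathcal{X}_0$ (Lemma~\ref{lemma:recurrence}) to glue the phases. The paper's own proof is in fact considerably briefer than your sketch---it simply invokes these three ingredients and the boundedness of $\mathcal{X}_0$ without working through the excursion decomposition, Ville's inequality, or the union-bound machinery you outline; your version fills in technical scaffolding that the paper leaves implicit.
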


\begin{proof}
    Let $x_k \in \mathcal{X}_0$, which implies according to \eqref{eq:control_policy} the use of the MPC law \eqref{eq:mpc_law}. 
    Note that $\mathcal{X}_0$ is bounded, $0 \in \mathrm{int}(\mathcal{X}_0)$, and $\mathcal{X}_0$ is rendered probabilistically forward invariant under \eqref{eq:mpc_law}.
    Then, from Theorem \ref{th:issp_controlled_system}, it follows that system \eqref{eq:system} under \eqref{eq:control_policy} is ISSp with ISSp Lyapunov function \eqref{eq:lyapunov_candidate} while in $\mathcal{X}_0$.
    If the system leaves $\mathcal{X}_0$, we stop controlling it according to \eqref{eq:control_policy}.
    By Theorem \ref{th:issp_autonomous_system}, the uncontrolled system is ISSp and by Lemma \ref{lemma:recurrence}, the system will return to $\mathcal{X}_0$ in finite time.
\end{proof}


\section{Numeric Example}\label{sec:simulation}

We consider the time-discrete stochastic LTI system
\begin{align}\label{eq:example_system}
\begin{split}
    & x_{k+1} = \begin{bmatrix} 0.924 & -0.100 \\ 0.050 & 1.000 \end{bmatrix} x_k + \begin{bmatrix} 0.025 \\ 0.000 \end{bmatrix} u_k + w_k,  \\
    & w_k \sim \mathcal{N} \left( \begin{bmatrix} 0 \\ 0 \end{bmatrix} , \begin{bmatrix} 0.0050 & 0 \\ 0 & 0.0075 \end{bmatrix} \right)
\end{split}
\end{align}
with sampling time $T_s = 0.05$.
The nominal dynamics of \eqref{eq:example_system} are asymptotically stable.
We first investigate ISSp of the autonomous system, followed by examining ISSp under stochastic MPC and the control policy \eqref{eq:control_policy}, respectively.

\subsection{ISSp of Autonomous Dynamics}

We find that the function
\begin{equation}\label{eq:sim_example_lyapunov}
    V(z_k) = z_k^\top \begin{bmatrix} 1.093 & 0.554 \\ 0.554 & 2.915 \end{bmatrix} z_k
\end{equation}
is a Lyapunov function for the autonomous nominal dynamics of \eqref{eq:example_system}.
Hence, Assumption \ref{as:nominal_autonomous_stability} is satisfied and the autonomous uncertain dynamics are ISSp by Theorem \ref{th:issp_autonomous_system}.
We show 100 state trajectories of the autonomous uncertain system when (i) starting from the same initial condition $x_0 = \begin{bmatrix} 10 & 0 \end{bmatrix}^\top$ in Figure \ref{fig:autonomous_system__state}, left column, and (ii) starting from different initial conditions in Figure \ref{fig:autonomous_system__state}, right column.
In both cases, we observe convergence (in probability) to a neighborhood of the origin according to Definition \ref{def:issp}.
Figure \ref{fig:ISSp_lyapunov_autonomous} shows \eqref{eq:sim_example_lyapunov} evaluated along the 100 state trajectories of the autonomous uncertain system, as shown in Figure \ref{fig:autonomous_system__state}, left column.
The horizontal black line indicates the value of $\eqref{eq:sim_example_lyapunov}$ for which we cannot expect \eqref{eq:sim_example_lyapunov} to further decrease since $- \frac{\sigma_\mathrm{min}^2 \big((P \! - \! A^\top P A)^{\frac{1}{2}} \big)}{\sigma_\mathrm{max}^2 \big( P^\frac{1}{2} \big)} {V}(x_k) \ + \! \Tr(P \Sigma^w) \! = \! 0$, see Theorem \ref{th:issp_autonomous_system}. 

\begin{figure}[t]
    \centering
    \includegraphics[width=\linewidth]{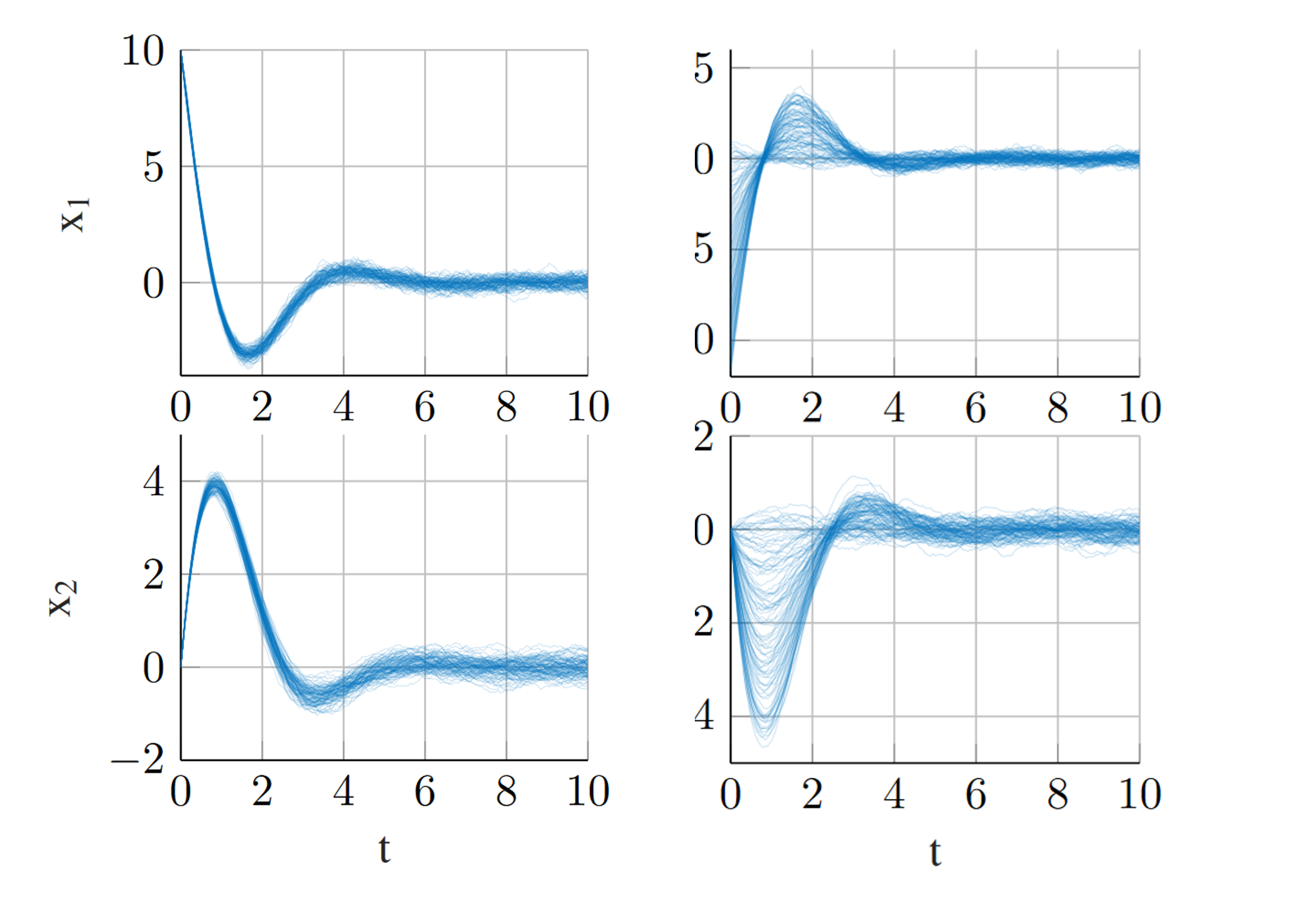}
    \caption{State trajectories of the uncertain autonomous system. Left column: 100 sample trajectories starting at the same initial state. Right column: 100 sample trajectories starting at different initial states.}
    \label{fig:autonomous_system__state}
\end{figure}

\begin{figure}[t]
    \centering
    \input{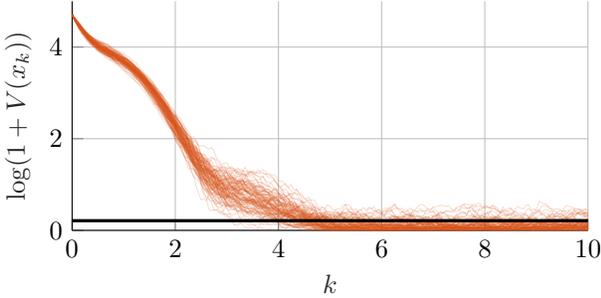}
    \caption{Evolution of \eqref{eq:sim_example_lyapunov} along the 100 state trajectories of the uncertain autonomous system shown in Figure \ref{fig:autonomous_system__state}, left column.}
    \label{fig:ISSp_lyapunov_autonomous}
\end{figure}

\subsection{ISSp in Closed-Loop}

Next, we design an MPC \eqref{eq:smpc} that satisfies Assumptions \ref{as:quadratic_cost} and \ref{as:standard_stability}.
For the quadratic state and terminal cost, we choose
\begin{equation}
    Q \! = \! \begin{bmatrix} 2 & 0 \\ 0 & 0.1 \end{bmatrix}, R \! = \! 1, \text{and~} Q_f \! = \! \begin{bmatrix} 14.250 & 1.213 \\ 1.213 & 28.339 \end{bmatrix},
\end{equation}
where $Q_f$ is computed via the associated infinite-horizon linear-quadratic regulator problem.
The latter is in addition used to define the terminal controller, and the terminal region is chosen to be a sublevel set of the terminal cost function.
It is easy to verify that this design satisfies Assumption \ref{as:standard_stability}.
The constraints are given via $-1 \leq x_1 \leq 12$, $-2 \leq x_2 \leq 4$ and $-37 \leq u \leq 37$, implicitly defining \eqref{eq:state_constraints} and \eqref{eq:input_constraints}.
We compute the tightened constraints \eqref{eq:tightened_constraints} using $\delta = 0.15$ and exploiting the known Gaussian distribution of $w_k$.
Assumption \ref{as:V_subset_X0} is satisfied for the described set-up.

Figure \ref{fig:closed_loop_states} shows the results of 100 runs of system \eqref{eq:example_system} under control policy \eqref{eq:control_policy}.
We observe probabilistic convergence to a neighborhood of the origin according to Definition \ref{def:issp}.
While a large fraction of the trajectories remains safe, a small fraction leaves the safe set, as best seen in the phase plane plot.
Whenever this is the case, we stop applying control inputs according to the back-up strategy.
All trajectories that leave the safe set return to it in finite time, illustrating its recurrent nature according to Lemma \ref{lemma:recurrence}.

\begin{figure}[b!]
    \centering
    \includegraphics[width=\linewidth]{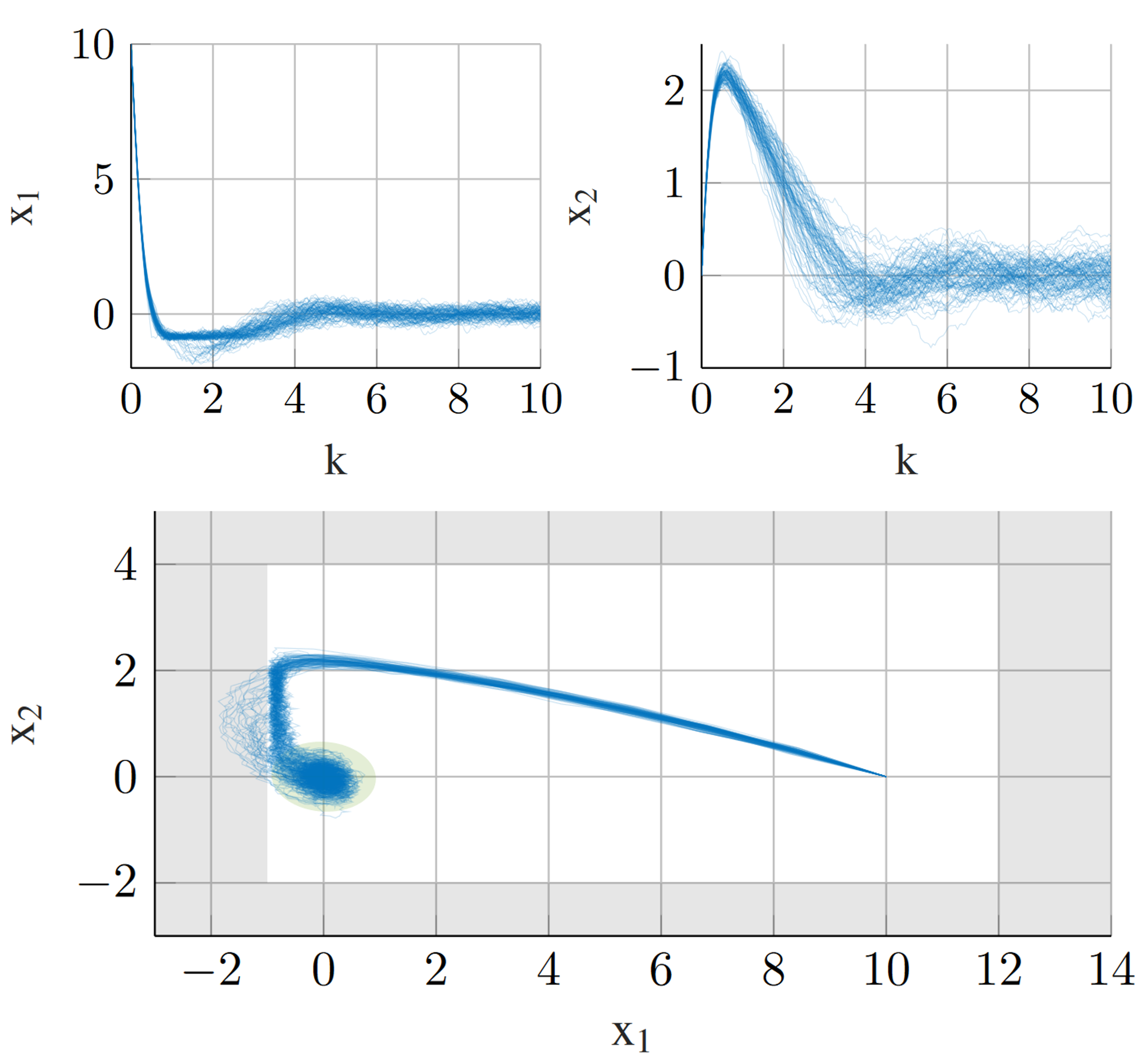}
    \caption{Top: State trajectories of system \eqref{eq:example_system} in closed-loop over time. Bottom: Corresponding phase plane plot. The gray-shaded are indicates the state constraints, while the green-shaded area is the terminal region of MPC \eqref{eq:smpc}.}
    \label{fig:closed_loop_states}
\end{figure}
\section{Conclusions}\label{sec:conclusions}

\addtolength{\textheight}{-11.5cm}

We have investigated stability properties of stochastic MPC.
In particular, we have provided conditions for which stochastic MPC is guaranteed to be input-to-stable in probability.
In contrary to traditional stability analysis, the proposed results do not rely on strict repeated feasibility of the MPC.
Loss of feasibility is captured by a back-up controller, which is explicitly accounted for in the stability analysis.
Combining both the stability results for the MPC and the back-up controller, we have guaranteed input-to-state-stability in probability in closed-loop. 

Future research will be dedicated towards deriving rigorous probabilistic safety guarantees using different back-up control strategies.
Furthermore, the connection between traditional input-to-state stability for bounded disturbances and its probabilistic counterpart as well as implications on the connection between robust and stochastic MPC formulations will be investigated.








\bibliographystyle{IEEEtran}
\bibliography{Bibliography}

\end{document}